\documentclass[journal]{IEEEtran}

\usepackage{cite}
\usepackage{bm}
\usepackage{mathrsfs}
\usepackage{algorithmic}
\usepackage{algorithm}
\usepackage[pdftex]{graphicx}
\usepackage{psfrag}
\usepackage{subfigure}
\usepackage{url}
\usepackage{color}
\usepackage{stfloats}
\usepackage{amsmath}
\usepackage{empheq}
\usepackage[numbers,sort&compress]{natbib}
\usepackage{amsthm}  
\usepackage{amssymb}
\usepackage{array}
\usepackage{balance}
\usepackage{booktabs}
\usepackage{multirow}
\hyphenation{op-tical net-works semi-conduc-tor}

\newtheorem{theorem}{\textbf{Theorem}}

\newtheorem{lemma}{\textbf{Lemma}}
\newtheorem{corollary}{\textbf{Corollary}}



\begin{document}

\title{Analyzing Power Beacon Assisted Multi-Source Transmission Using   Markov Chain\\
}

\author{Xuanxuan Tang, Yansha Deng, \emph{Member, IEEE}, Yueming Cai, \emph{Senior Member, IEEE}, Wendong Yang, and Arumugam Nallanathan, \emph{Fellow, IEEE}



\thanks{This work was supported by the National Natural Science Foundation of
China under Grant No. 61771487, and the China
Scholarship Council. This work was done while X. Tang
was a visiting student with the Department of Informatics, King's
College London. The corresponding author is Yansha Deng.}
\thanks{X. Tang, Y. Cai, and W. Yang are with the College of Communications
Engineering, Army Engineering University of PLA, Nanjing 210007, China
(email: tang\_xx@126.com, caiym@vip.sina.com, ywd1110@163.com).}
\thanks{Y. Deng is with the Department of Informatics, King's
College London, London WC2R 2LS, U.K. (email:
yansha.deng@kcl.ac.uk).}

\thanks{A. Nallanathan is with the School of Electronic Engineering and
Computer Science, Queen Mary University of London, London, UK (email:
arumugam.nallanathan@qmul.ac.uk).}
%





}


\maketitle

%
\begin{abstract}
Wireless power transmission (WPT) is envisioned to be a promising technology for prolonging the lifetime of wireless devices in
energy-constrained networks.
This paper presents a general power beacon (PB) assisted multi-source transmission, where a practical source selection scheme with information transmission (IT) mode or non-IT mode is developed to maximize the transmission reliability.
In the IT mode, a zero-forcing (ZF) beamformed signal with no interference to the destination is transmitted  at the multi-antenna PB to supply wireless energy for the sources, and bring non-negative effect to the destination. Among multiple sources, the  energy-sufficient source with the best  channel quality is selected for wireless information transmission (WIT), while the other  sources remain for energy harvesting. In the non-IT mode, the equal power transmission is adopted at PB to focus on energy delivery.
Using Markov chain theory, the energy arrival and departure  of each finite-capacity storage at the source is characterized mathematically,  and the comprehensive analytical expressions of the energy outage probability (EOP), the connection outage probability (COP), and the average transmission delay (ATD) are formulated and derived.
Our results reveal that the EOP, COP, and ATD  can be significantly improved via increasing the number of sources deployed in the proposed network with finite transmit power of PB. We also prove that the multi-source network  will never experience energy outage with infinite transmit power of PB.

\end{abstract}

\begin{IEEEkeywords}
Wireless power transfer,  Markov chain, energy storage,  energy outage probability, average transmission delay.
\end{IEEEkeywords}

\section{Introduction}\label{sec1 introduction}


The lifetime of current wireless devices is significantly limited by the energy capacity of their batteries, especially in some energy-constrained scenarios, such as  wireless sensor networks (WSNs) \cite{wsn},  wireless body area networks (WBANs) \cite{Sui2017}, and low-power wide-area networks (LPWANs) \cite{Raza2017}. To cope with this limitation, the energy harvesting (EH)
has emerged as a promising technology to enable the sustainable energy for the devices  without replacing their battery \cite{Dan2015,Qian2016}. Harvesting energy from the ambient environment sources like solar, wind, thermoelectric, electromechanical, etc, has been extensively researched and applied in industry. However, this approach does not suitable for  wireless communication devices, as it could not guarantee controllable  and continuous energy supply due to the randomness and instability of the environment, which may degrade the user experience.

Recently, the wireless power transmission (WPT)  via radio-frequency (RF) radiation has drawn much attention among the wireless communities due to the great advancements of microwave technologies over the past decades \cite{Pan2016,Yang2017}. Compared to  collecting energy  via natural sources,
WPT is capable of delivering a controllable amount of energy as well as  information. In order to realize WPT in practice, the simultaneous wireless information and power transmission (SWIPT) via the same
modulated microwave has been proposed and discussed extensively in existing  literature. Specifically, two classic practical architectures  have been  presented, namely the ``time-switching'' architecture \cite{Zhong2014,ZengandZhang2015} and the ``power-splitting'' architecture \cite{Zhou2013,Wang2017b,Zhou2015}, where the receiving signal can be  splitted  either over time or power domain for independent WPT and wireless information transmission (WIT), respectively. Note that the efficiency of WPT relies on the received signal power while the reliability of WIT hinges on the received signal-to-noise ratio (SNR) \cite{KaibinandXiangyun2015}, as the noise power is relatively low, the distance for WPT is drastically shorter than that for WIT. Under this circumstance,
 the short range limitation of WPT and the same propagation link between WPT and WIT largely limit the  information transmission range.

To face the challenge of  aforementioned SWIPT design, some works have proposed the power beacon (PB) assisted WPT systems \cite{KaibinandXiangyun2015,Zhou2018}. In such systems, WPT and WIT processes are decoupled and  the PB can be deployed much closer to the wireless-powered equipments, which boosts the efficiency of WPT significantly. In 2017, the PB-based product ``Cota Tile'' has been  designed by Ossia Inc. to  charge wireless devices at home, which has received the ``Innovation Awards'' at the 2017 Consumer Electronics Show (CES) \cite{CotaCES2017}. In \cite{Jiang2016}, a novel PB assisted wiretap channel was studied to exploit secure communication between the energy constrained source and  a legitimate user under eavesdropping. In  \cite{Ma2015}, the authors studied a PB assisted wireless-powered
system, where each user first harvested energy from RF signals broadcast by its associated AP and/or the PB in the downlink and then used the harvested energy for information transmission in the uplink. The faction of the time duration for downlink WPT was then optimized for each user.
In \cite{Chen2017}, the users clustering around the PB for WPT, and deliver information to the APs.
In \cite{Shi2018}, the  device-to-device (D2D) communication sharing the resources of downlink cellular network was powered by the energy from PBs.


 The aforementioned works
 have assumed no energy storage across different time slots, and
the wireless-powered devices  consume all the harvested energy
in the current time slot to perform its own information transmission (IT). This type of operating mode, named as ``harvest-use'' \cite{Salem}, may  not practical due to the following two-fold reasons: 1) the wireless devices are usually equipped with battery, which can storage energy over different time slots; and  2) the ``harvest-use'' approach  results in the random fluctuation of
the instant transmit power of a wireless-powered device,
which may not only affect the performance of the  device itself, but also  create chaos  to the whole system.

Recent research have shifted to the so-called ``harvest-store-use''  operating mode \cite{Pielli2017}, where the devices are capable of storing the harvested energy in a rechargeable
battery.
In \cite{BiandChen2016}, an accumulate-and-jam protocol was presented to enhance the physical layer security in wireless transmission. The full-duplex (FD) relaying scheme was studied in the WPT networks in \cite{Liu2016a,Liu2017}. Unlike the single energy storage scenario in  \cite{BiandChen2016,Liu2016a,Liu2017},   the multiple energy storages were considered in \cite{AhmedPIMRC2015} and \cite{Yao2017} with energy harvested from natural sources and wireless signals, respectively.
In \cite{Yao2017}, a wireless-powered uplink and downlink network was studied,  where the WPT occurs in the downlink  performing by time division AP, and the IT occurs in both uplink and downlink, where the AP transmits the downlink information, and  the users use the harvested energy storing for uplink information. Both the energy at the AP and the  users are modelled using Markov chain,  and the time-frequency resource allocation and user scheduling problem was studied to minimize overall energy consumption.

In this paper, we study the energy storage and data transmission  of PB assisted wireless-powered multi-source networks, where the network can operates in  the non-IT mode or the IT mode. In the non-IT without any energy sufficient sources,  the whole network experiences  energy outage event with no IT in the network, thus the PB uses the equal power allocation among all antennas for directing its energy. In the IT mode, the source with the best  channel quality among all the energy-sufficient sources is selected for IT, to avoid the interference from the PB to the destination, a zero-forcing (ZF) beamformed signal is designed during the WPT, with no inteference to the data transmission between the source and the destination.


\begin{itemize}
  \item
  We formulate a  Markov-based analytical framework for the energy storage and energy usage of  the proposed multi-source wireless-powered networks to characterize its dynamic behaviors of the energy arrival and departure.
  To facilitate the network performance analysis, we also derive the state transition probabilities of the proposed network, and the stationary probabilities of all the states.
  \item  We propose a operating mode selection procedure for the proposed network to select the  non-IT mode and  the IT mode based on the energy states of all the sources. A flexible beamforming transmission scheme is proposed at the PB, which can adapt to the network operating mode.  In the non-IT mode, the beamformer is designed with equal power among antennas. In the IT mode, the beamformer is designed to bring no interference to the data transmission between the selected source and the destination.
  \item  Based on our derived stationary probabilities of all the states,  we derive the energe outage probability and the connection outage probability of proposed network in the non-IT mode and the IT mode, respectively.
  To quantify the delay performance, we also define and derive anlaytical expression for the average transmission delay of proposed networks. Our derived analytical results are all validated via simulation, which show the correctness of our derivations, and demonstrate design insights.

\end{itemize}

The remainder of the work is organized as follows: Section \ref{sec2
systemmodel} describes the system model and
presents the details of the operating mode selection as well as the source selection procedure. In \ref{sec4}, the energy state transitions among all the states are carried out.
In \ref{sec5}, the energy outage probability, the connection outage probability, and the average transmission delay  of the  network are respectively investigated. Simulation results are given in Section
\ref{sec6 NUMERICAL RESULTS}, and Section \ref{sec7 Conclusions}
summarizes the contributions of this paper.

\emph{\textbf{Notation:}} Throughout this paper, the boldface uppercase letters are used to denote matrices or vectors. ${{\left( \cdot  \right)}^{T}}$, ${{\left( \cdot  \right)}^{H}}$, and ${{\left( \cdot  \right)}^{\dagger }}$ are denoted as the transpose operation, the conjugate transpose operation, and the orthogonal operation, respectively. ${{F}_{\gamma }}\left( \cdot  \right)$ and ${{f}_{\gamma }}\left( \cdot \right)$ represent the cumulative distribution function (CDF) and the probability density function (PDF) of random variable $\gamma$, respectively. $\mathbb{E}\left[ \cdot  \right]$ denotes the expectation operation.



\section{System Model}\label{sec2 systemmodel}
We consider a multi-source wireless-powered transmission network as shown in Fig. \ref{figsys}, which consists of single power beacon node $B$, $K$ number of  wireless-powered source nodes $\left\{ {{S_k}} \right\}_{k = 1}^K$, and a destination node $D$. It is assumed that $B$ is equipped with ${{N}_{B}}$ antennas,  and all the other nodes are equipped with a single antenna, where all nodes are working in half-duplex (HD) mode. Each source (IoT/ mobile device) is equipped with an energy storage with a finite energy capacity of ${{\varepsilon }_{T}}$.
We assume  that all the channels experience quasi-static Rayleigh fading and the channel coefficients keep constant during a block time ${{T}_{0}}$ but change independently from one packet time to another\footnote{This assumption has been extensively adopted in the WPT researches \cite{BiandChen2016,Krikidis2014,Zhang2016}.}. A standard path-loss model \cite{Hosseini2014,Zhong2014} is adopted, namely the average channel power gain $\bar\gamma_{ab}=\mathbb{E}\left[ {{{\left| {{h_{ab}}} \right|}^2}} \right] = d_{ab}^{ - \alpha }$, where $\alpha$ is the path-loss factor, $h_{ab}$ and $d_{ab}$ denote the channel coefficient and the distance between $a$ and $b$, respectively.

\begin{figure}
\begin{center}
  \includegraphics[width=3.5 in,angle=0]{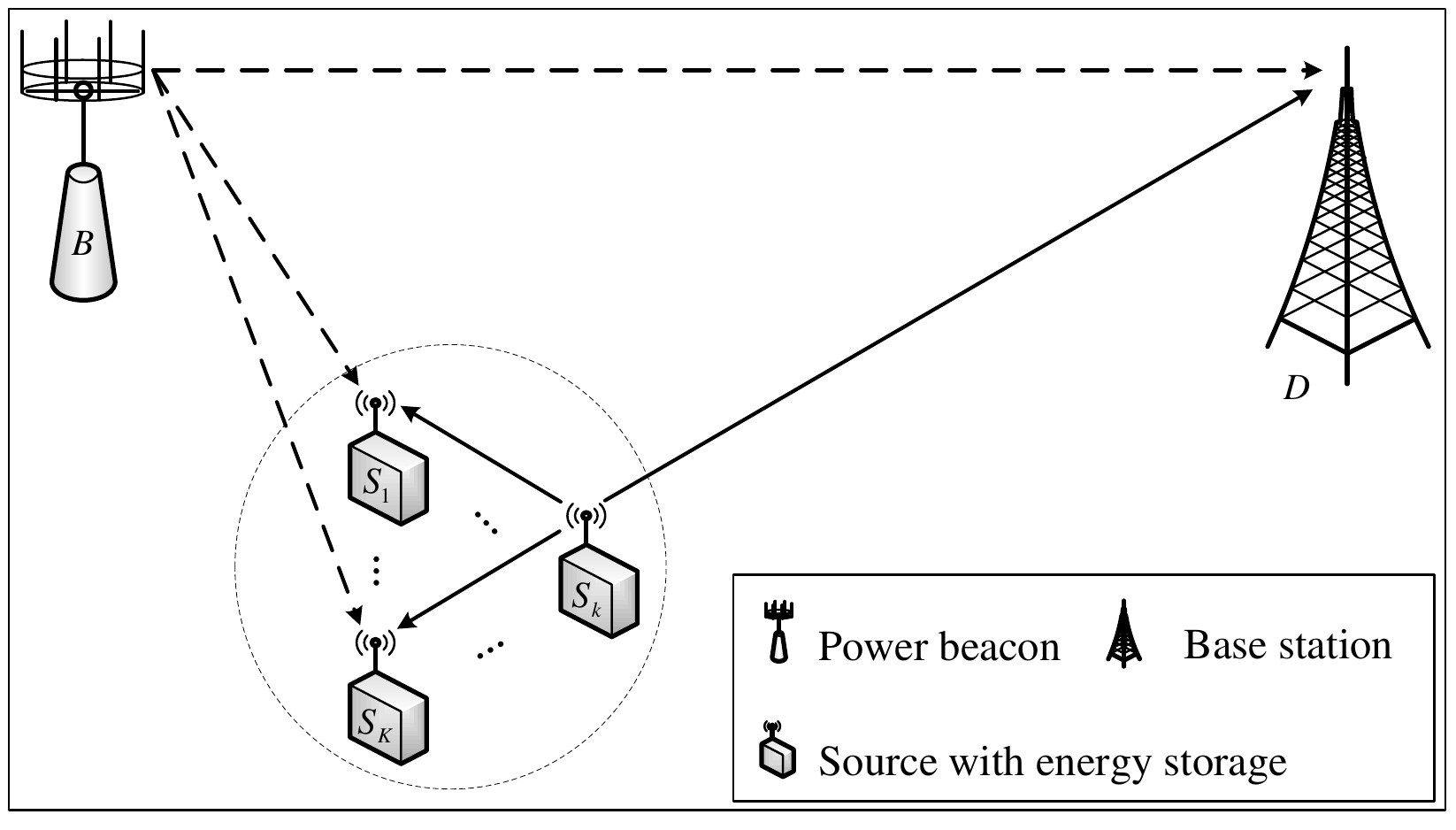}\\
  \caption{System model}\label{figsys}
\end{center}
\end{figure}

\subsection{Energy Discretization and State Modeling}
To quantify the energy storage at the sources, we define a discrete-level model \cite{BiandChen2016,Liu2017}, namely, each storage is discretized into $1+L$ levels, where $L$ is the discretizing level of the network, and  the $l$-th energy level is defined as
\begin{align}\label{eq1}
{\varepsilon _l} = \begin{array}{*{20}{c}}
{l \cdot \varepsilon _\Delta,}&{l \in \left\{ {0,1, \cdots ,L} \right\}}
\end{array},
\end{align}
where ${\varepsilon _\Delta } = {\textstyle{{{\varepsilon _T}} \over L}}$ is the single unit of energy. For instance, if the  new energy arrival from the harvested energy  at the $k$th source node  is $\varepsilon _{k}$, the amount of energy that can be saved in the energy storage after discretization can be expressed as \cite{BiandChen2016,Liu2017}
\begin{align}\label{eqrule}
\tilde \varepsilon _{k} = {\varepsilon _{{l^*}}},{\rm{with}}\;{l^{*}} = \mathop {\arg \max }\limits_{l \in \left\{ {0,1, \cdots ,L} \right\}} \left\{ {{\varepsilon _l}:{\varepsilon _l} \le \varepsilon _{k}} \right\}.
\end{align}

 Recall  that there are $K$ storages and each storage has $(1+L)$ levels, thus we have $N$  states in total with $N={{\left( 1+L \right)}^{K}}$.
The energy level indexes in all the storages  form an energy state set $\Theta = \{ { {{{\vec s}_1}, \cdots ,{{\vec s}_n}, \cdots ,{{\vec s}_N}}  }\}$, where  the $n$th state is given by
\begin{align}\label{eq2}
{\vec s_n} = \left[ {l_1^n, \cdots ,l_k^n, \cdots ,l_K^n} \right],
\end{align}
with $n\in \left\{ 1,\cdots ,N \right\}$ and $l_{k}^{n} { \in \left\{ {0,1, \cdots ,L} \right\}}$ representing the energy level index of $k$th storage at state ${{\vec s}_{n}}$.

\subsection{Network Operating Modes }\label{sec3}
At any given time, the network is under one specific energy state, and different operating modes are adopted in different states.
When the source does not have enough energy to support the IT operation, we define this source as energy outage. When all the sources experience the energy outage at the same time, we define the multi-source network as energy outage.
As a result, we assume two operating modes: 1) the network operates in IT mode when there is as least one source can perform IT operation; and 2) the network operates in non-IT mode when all the sources are in energy outage.
Next, we describe the operating mode selection procedure, and  the transmission formulation for each operating mode.

\subsubsection{Operating Mode Selection}
We select the operating mode based on the  distributed selection method \cite{Bletsas2006,Tang2017}.
At the start of each time slot, a pilot signal is broadcasted by  $D$.  Using this pilot
      signal, all the sources that are not in energy outage as well as the PB can individually estimate the channel power gains between themselves and $D$. For each source that is not in energy outage, its timer with a parameter inversely
      proportional to its own channel power gain  is switched on,
 namely, the timer of source $S_k$ has the
      parameter of ${{{C_0}} \mathord{\left/
 {\vphantom {{{C_0}} {{{\left| {{h_{{S_k}D}}} \right|}^2}}}} \right.
 \kern-\nulldelimiterspace} {{{\left| {{h_{{S_k}D}}} \right|}^2}}}$, where  $h_{{S_k}D}$ denotes the channel coefficient between $S_k$ and $D$, and
       $C_0$ is a constant and is properly set to ensure that the shortest duration among all the timers always finishes
       within the given duration \cite{Bletsas2006,Tang2017}. Once the shortest timer expires, the corresponding source sends a short flag signal to declare its
       existence,  and all the other sources who are  waiting for
       their timer expiring will back off when they
      hear this flag signal from another source and start to harvest energy. At the same time,  $D$ will get ready for receiving useful information upon hearing this flag signal.

For the source that is in energy outage, it will neither estimate its channel nor set a timer. Hence, if  the whole network undergoes energy outage, no flag signal would be produced during this flag signal duration. As a result, the operating mode of the network can be easily determined and known by all the nodes within the network. For the notation convenience, the set of indexes of source nodes that are in IT mode at state ${{\vec s}_{n}}$ are defined as
\begin{equation}\label{}
\vartheta _{n}^{TH}=\left\{  k:l_{k}^{n}\ge l_{S}^{th} \right\},
\end{equation}
 where $l_{S}^{th}$ denotes the transmit energy level threshold, which is expressed as
      \begin{align}\label{lsth}
       l_S^{th} = \mathop {\arg \min }\limits_{l \in \left\{ {1, \cdots ,L} \right\}} {\kern 1pt} \left\{ {{\varepsilon _l}:{\varepsilon _l} \ge \varepsilon _S^{th}} \right\},
        \end{align}
        where $\varepsilon _{S}^{th}$ denotes the transmit energy threshold of sources.

\subsubsection{Non-IT Operating Mode}
When the network remains at the Non-IT operating mode, we have $\vartheta _{n}^{TH}={{\Phi }_{0}}$, where ${{\Phi }_{0}}$ is the empty set. As described above, no information could be transmitted and all of the sources will harvest energy from the wireless signal transmitted by $B$. Specifically, the harvested energy at the $k$-th source is expressed as
\begin{align}\label{eq4}
\varepsilon _{k}^{n}=\eta {{T}_{0}}{{P}_{B}}{{\left| \mathbf{h}_{B{{S}_{k}}}^{T}{{\mathbf{w}}_{1}} \right|}^{2}},
\end{align}
where ${{\bf{h}}_{B{S_k}}} = {\left[ {{h_{{B_1}{S_k}}}, \cdots ,{h_{{B_b}{S_k}}}, \cdots ,{h_{{B_{N_B}}{S_k}}}} \right]^T}$ represents the channel coefficient vector between $B$ and ${{S}_{k}}$, $k\in \left\{ 1,\cdots ,K \right\}$, $b\in \left\{ 1,\cdots ,N_B \right\}$. ${{\mathbf{w}}_{1}}\in {{\mathbb{C}}^{{{N}_{B}}\times 1}}$ is the normalized weight vector applied at $B$ with its $b$th element satisfying ${{w}_{1,b}}={1}/{\sqrt{{{N}_{B}}}}\;$. The amount of harvested energy that can be saved in the $k$th energy storage after discretization, $ \tilde \varepsilon _k^{n}$, can be obtained according to \eqref{eqrule} by making an appropriate replacement, namely $\varepsilon _k \to \varepsilon _k^{n}$, $\tilde \varepsilon _k \to \tilde \varepsilon _k^{n}$.

\subsubsection{ IT Operating Mode}
When the network remains at the IT operating mode, we have $\vartheta _{n}^{TH} \ne {{\Phi }_{0}}$. As such,  a source that has the largest channel power gain  is selected for IT operation among all the satisfied sources. Mathematically, the index of the selected source can be given by
\begin{align}\label{eq6}
{{i}^{*}}=\arg \underset{k\in \vartheta _{n}^{TH}}{\mathop{\max }}\,\left\{ {{\left| {{h}_{{{S}_{k}}D}} \right|}^{2}} \right\}.
\end{align}

Enjoying the energy harvested from   $B$, the source to destination transmission may also suffer from interference brought by  the wireless signals delivered by $B$.
Note that $B$ has also estimated the channel between itself and $D$ with the pilot signal. To exploit the advantages of multiple antennas, the ZF beamforming scheme can be used at $b$ to fully avoid the interference from $B$ to $D$. To be specific, a normalized weight vector $\mathbf{w}_{2}\in {{\mathbb{C}}^{{{N}_{B}}\times 1}}$ satisfying $\mathbf{w}_{2}=\mathbf{h}_{BD}^{\dagger }$ is applied at $B$ so as to keep $\mathbf{h}_{BD}^{T}\mathbf{w}_{2}\text{=}0$, where ${{\bf{h}}_{B{D}}} = {\left[ {{h_{{B_1}{D}}}, \cdots ,{h_{{B_b}{D}}}, \cdots ,{h_{{B_{N_B}}{D}}}} \right]^T}$ represents the channel coefficient vector between $B$ and $D$, and ${{\left( \cdot  \right)}^{\dagger }}$ denotes the orthogonal operation. Hence, the received signal-to-noise ratio (SNR) at $D$ is given by
\begin{align}\label{eq7}
\gamma _{D}^{n,{i^ * }} = \frac{{{P_S}}}{{N_0}}{\left| {{h_{{S_{{i^ * }}}D}}} \right|^2},
\end{align}
where $N_0$ is the power density of the additive white Gaussian noise (AWGN),
and $P_S$ represents the transmit power of sources with
\begin{align}
{P_S} = {l_S}\frac{{{ \varepsilon _\Delta}}}{T_0},
\end{align}
where $l_S$ is  the actual transmit energy level satisfying $l_S^{th} \le {l_S} \le L$.



At the same time, all the other sources except ${{S}_{{{i}^{*}}}}$  harvest energy from wireless signals, and the harvested energy at the $k$th source on condition that $S_{i^*}$ is selected for IT could be expressed as
\begin{align}\label{}
\varepsilon _{k}^{n,{{i}^{*}}}=\eta {{T}_{0}}\left( {{P}_{S}}{{\left| {{h}_{{{S}_{{{i}^{*}}}}{{S}_{k}}}} \right|}^{2}}+{{P}_{B}}{{\left| \mathbf{h}_{B{{S}_{k}}}^{T}\mathbf{w}_{2} \right|}^{2}} \right),
\end{align}
where $k\in \left\{ 1,\cdots ,K \right\}\backslash \left\{ {{i}^{*}} \right\}$, and the amount of harvested energy that can be saved in the $k$th energy storage after discretization, $\tilde \varepsilon _k^{n,{i^*}}$, can be derived according to \eqref{eqrule} by making an appropriate replacement, namely $\varepsilon _k \to \varepsilon _k^{n,{i^*}}$, $\tilde \varepsilon _k \to \tilde \varepsilon _k^{n,{i^*}}$.

\section{Energy State Transitions}\label{sec4}
In this section, we present a thorough study on the transitions of the energy states. Let us denote
	${{\vec s}_{n}}$
 and
	${{\vec s}_{{{n'}’}}}$
as the states at the current and the next time slots, respectively, $n,{n'}\in \left\{ 1,\cdots ,N \right\}$. We then denote the transition probability to transfer from 	${{\vec s}_{n}}$ to	${{\vec s}_{{{n'}}}}$ within one step  as ${{p}_{{{\vec s}_{n}}\to {{\vec s}_{{{n}'}}}}}$.
For the notation convenience, the
non-IT set of states $\Theta_1$ and the IT set of states $\Theta_2$ are defined as
\begin{align}\label{eqfai1}
{{\Theta }_{1}}=\left\{ {{\vec s}_{n}}:\forall n,s.t.\vartheta _{n}^{TH}={{\Phi }_{0}} \right\},
\end{align}
and
\begin{align}\label{eqfai2}
{{\Theta }_{2}}=\left\{ {{\vec s}_{n}}:\forall n,s.t.\vartheta _{n}^{TH}\ne {{\Phi }_{0}} \right\},
\end{align}
respectively.
Note that ${{\Theta }_{1}}$ represents the set of states that all the sources have to conduct EH operation.
In other words,
we have
${\Theta _1} = \left\{ {\left[ {0, \cdots ,0,0} \right], \cdots ,\left[ {\left( {{l_S} - 1} \right), \cdots ,\left( {{l_S} - 1} \right),\left( {{l_S} - 1} \right)} \right]} \right\}$. Besides, ${{\Theta }_{2}}$ is the set of states that at least one source can perform  IT operation. It is obvious that ${{\Theta }_{1}}$ is the complement set of ${{\Theta }_{2}}$,
so the numbers of states in $\Theta_1$ and $\Theta_2$ are $N_{1}=({l_S})^K$ and $N_{2}=N-({l_S})^K$, respectively. The energy level increment between two states is defined as
\begin{align}\label{}
\Delta {{\vec s}_{n}}={{\vec s}_{n'}}-{{\vec s}_{n}}=\left[ \Delta l_{1}^{n},\cdots ,\Delta l_{k}^{n},\cdots ,\Delta l_{K}^{n} \right],
\end{align}
where
	$\Delta l_{k}^{n}=l_{k}^{{{n}'}}-l_{k}^{n}$
, and
	$l_{k}^{{{n}'}},l_{k}^{n}\in \left\{ 0,1,\cdots ,L \right\}$
.



Note that when the network operates in the non-IT mode, the energy level in any of the sources will not decline. Hence, it is not possible to transfer from ${{\vec s}_{n}}$ to ${{\vec s}_{{{n'}}}}$ within one step if ${{\vec s}_{{{n'}}}}\notin \Theta _{1}^{n,{n}'}$, where $\Theta _{1}^{n,{n}'}$ is a subset of $\Theta$ which satisfies $\Theta _{1}^{n,{n}'}=\left\{ {{\vec s}_{{{n'}}}}:\forall k,s.t.\Delta l_{k}^{n}\ge 0 \right\}$. It is noted that the construction of $\Theta _{1}^{n,{n}'}$ relies on a specific ${{\vec s}_{n}}$. Similarly, when the network operates in the IT mode, the energy level in any of the sources except for the selected one $S_{i^*}$ will not decline, and the energy level of $S_{i^*}$ will decline $l_{S}^{th}$ due to the IT operation.
Hence, it is not possible to transfer from ${{\vec s}_{n}}$ to ${{\vec s}_{{{n'}}}}$ within one step if ${{\vec s}_{{{n}'}}}\notin \Theta _{2}^{n,{n}',i^*}$, where $\Theta _{2}^{n,{n}',i^*}$ is a subset of $\Theta$ which satisfies $\Theta _{2}^{n,{n}',i^*}=\left\{ {{\vec s}_{{{n}'}}}:\Delta l_{i^*}^{n}=-l_{S}^{th};\forall k\ne i^*,\Delta l_{k}^{n}\ge 0 \right\}$.

\subsection{State Transition When Operating in Non-IT Mode}
If the network works in the non-IT mode, namely ${{\vec s}_{n}}\in {{\Theta }_{1}}$. Then the probability of the $k$th source ${{S}_{k}}$ to transfer from $l_{k}^{n}$ to  $l_{k}^{{{n}'}}$ within one step can be expressed as
\begin{align}\label{}
p_{k}^{n\to {n}'}=\Pr \left( {{\vec s}_{{{n}'}}}\left( k \right)=l_{k}^{{{n}'}}|{{\vec s}_{n}}\left( k \right)=l_{k}^{n} \right).
\end{align}

\begin{figure}
\begin{center}
  \includegraphics[width=3.0in,angle=0]{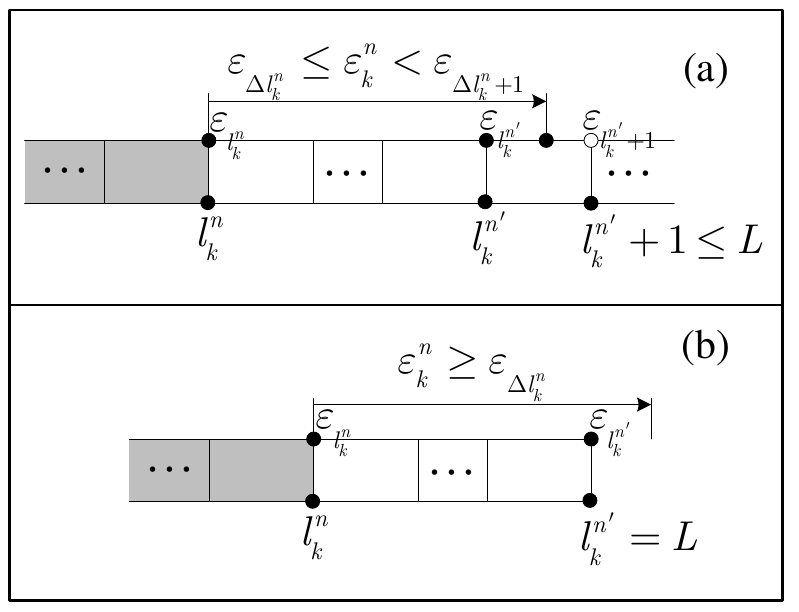}\\
  \caption{State transition when (a) $l_k^{n'} \ne L$ and (b) $l_k^{n'} = L$.}\label{figtransfer}
\end{center}
\end{figure}

As mentioned before, $p_{k}^{n\to {n}'}\text{=}0$ is always true when $\Delta l_{k}^{n}<0$ in non-IT mode. For the case with $\Delta l_{k}^{n}\ge 0$, we will show that $p_{k}^{n\to {n}'}$ differs when $l_{k}^{{{n}'}}=L$ and $l_{k}^{{{n}'}}\ne L$, which imply the state after transition for ${{S}_{k}}$ is full and not full, respectively. On one hand, if  $l_{k}^{{{n}'}}\ne L$, as shown in Fig. \ref{figtransfer} (a), $l_{k}^{n}$ can transfer to $l_{k}^{{{n}'}}$ within one step only when the harvested energy, $\varepsilon _{k}^{n}$, satisfying ${{\varepsilon }_{\Delta l_{k}^{n}}}\le \varepsilon _{k}^{n}<{{\varepsilon }_{\Delta l_{k}^{n}+1}}$. 
On the other hand, if $l_{k}^{{{n}'}}=L$, as shown in Fig. \ref{figtransfer} (b),  it will transfer from $l_{k}^{n}$ to  $l_{k}^{{{n}'}}$ within one step only when the harvested energy $\varepsilon _{k}^{n}$ satisfying $\varepsilon _{k}^{n}\ge {{\varepsilon }_{\Delta l_{k}^{n}}}$. Hence, we formulate the transition probability of the $k$th source as
\begin{align}\label{eq44}
p_{k}^{n\to {n}'}=\left\{ \begin{matrix}
   {{F}_{\varepsilon _{k}^{n}}}\left( {{\varepsilon }_{\Delta l_{k}^{n}+1}} \right)-{{F}_{\varepsilon _{k}^{n}}}\left( {{\varepsilon }_{\Delta l_{k}^{n}}} \right), & l_{k}^{{{n}'}}\ne L,  \\
   1-{{F}_{\varepsilon _{k}^{n}}}\left( {{\varepsilon }_{\Delta l_{k}^{n}}} \right), & l_{k}^{{{n}'}}=L.  \\
\end{matrix} \right.
\end{align}

When ${{\vec s}_{n}}\in {{\Theta }_{1}}$, as each source harvests energy independently, the transition probability of all the sources $p_{{{\vec s}_{n}}\to {{\vec s}_{{{n}'}}}}^{1}$ can be expressed as
\begin{align}\label{eq24}
p_{{{\vec s}_{n}}\to {{\vec s}_{{{n}'}}}}^{1}=\prod\limits_{k=1}^{K}{p_{k}^{n\to {n}'}}.
\end{align}
To derive $p_{{{\vec s}_{n}}\to {{\vec s}_{{{n}'}}}}^{1}$, we present the following Lemma.
\begin{lemma}\label{lemma1}
The CDF of   energy harvested at the $k$th source with the $n$th state $\varepsilon _{k}^{n}$ is derived as
\begin{align}\label{eqlemma1}
{{F}_{\varepsilon _{k}^{n}}}\left( x \right)=1-\exp \left( -\frac{x}{\eta {{T}_{0}}{{P}_{B}}{{{\bar{\gamma }}}_{B{{S}_{k}}}}} \right).
\end{align}
\end{lemma}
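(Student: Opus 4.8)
The plan is to exploit the fact that $\mathbf{w}_1$ is a \emph{deterministic} unit-norm vector (every entry equal to $1/\sqrt{N_B}$), independent of the channel realization, so that the scalar $\mathbf{h}_{B{S_k}}^T\mathbf{w}_1$ is merely a fixed linear combination of the i.i.d.\ Rayleigh-fading coefficients $\{h_{{B_b}{S_k}}\}_{b=1}^{N_B}$, hence again complex Gaussian. From there the result is a short distributional computation.

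First I would recall from the channel model that under quasi-static Rayleigh fading each entry $h_{{B_b}{S_k}}$ is a circularly-symmetric zero-mean complex Gaussian with $\mathbb{E}\big[|h_{{B_b}{S_k}}|^2\big]=\bar\gamma_{B{S_k}}=d_{B{S_k}}^{-\alpha}$, and that the $N_B$ entries are mutually independent. Consequently $g:=\mathbf{h}_{B{S_k}}^T\mathbf{w}_1=\frac{1}{\sqrt{N_B}}\sum_{b=1}^{N_B}h_{{B_b}{S_k}}$ is a sum of independent zero-mean complex Gaussians and is therefore itself zero-mean complex Gaussian, with variance $\mathbb{E}\big[|g|^2\big]=\frac{1}{N_B}\sum_{b=1}^{N_B}\mathbb{E}\big[|h_{{B_b}{S_k}}|^2\big]=\bar\gamma_{B{S_k}}$; the same conclusion follows more generally just from $\|\mathbf{w}_1\|^2=1$. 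Hence $|g|^2$ is exponentially distributed with mean $\bar\gamma_{B{S_k}}$, i.e.\ $F_{|g|^2}(z)=1-\exp(-z/\bar\gamma_{B{S_k}})$ for $z\ge 0$.

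Finally, since $\varepsilon_k^n=\eta T_0 P_B\,|g|^2$ is a strictly positive rescaling of $|g|^2$, I would transform the CDF: for $x\ge 0$,
\[
F_{\varepsilon_k^n}(x)=\Pr\!\left(|g|^2\le \frac{x}{\eta T_0 P_B}\right)=1-\exp\!\left(-\frac{x}{\eta T_0 P_B\,\bar\gamma_{B{S_k}}}\right),
\]
which is exactly \eqref{eqlemma1}.

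The computation is essentially routine; the one point I would be careful about — and the only real ``obstacle'' — is justifying that zero-forcing-free equal-gain beamforming with the unit-norm vector $\mathbf{w}_1$ leaves the effective channel power gain $|\mathbf{h}_{B{S_k}}^T\mathbf{w}_1|^2$ statistically equivalent to a single Rayleigh link of average power $\bar\gamma_{B{S_k}}$. This rests on (i) $\mathbf{w}_1$ being fixed a priori and thus independent of $\mathbf{h}_{B{S_k}}$, and (ii) the rotational/linear invariance of an i.i.d.\ complex Gaussian vector under multiplication by a unit-norm vector, which is precisely what keeps the variance at $\bar\gamma_{B{S_k}}$ rather than scaling it. Once that is noted, the exponential form of $F_{|g|^2}$ and the linear change of variable give the claim immediately.
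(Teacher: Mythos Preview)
Your proof is correct and follows essentially the same approach as the paper: both arguments observe that the deterministic unit-norm combiner $\mathbf{w}_1$ turns the i.i.d.\ complex Gaussian vector $\mathbf{h}_{B{S_k}}$ into a scalar $\mathcal{CN}(0,\bar\gamma_{B{S_k}})$ random variable, so that $|\mathbf{h}_{B{S_k}}^T\mathbf{w}_1|^2$ is exponential with mean $\bar\gamma_{B{S_k}}$, and then apply the linear scaling $\varepsilon_k^n=\eta T_0 P_B|\cdot|^2$. The only cosmetic difference is that the paper absorbs $P_B$ into an intermediate variable $X_1$ before scaling by $\eta T_0$, whereas you scale by $\eta T_0 P_B$ in one step.
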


\begin{IEEEproof}
We first present the CDF as
\begin{align}\label{eq26}
{{F}_{\varepsilon _{k}^{n}}}\left( x \right)= & {\rm Pr}(\eta {{T}_{0}}{{P}_{B}}{{\left| \mathbf{h}_{B{{S}_{k}}}^{T}{{\mathbf{w}}_{1}} \right|}^{2}} \le x) \notag\\
= &{\rm Pr}(\eta {{T}_{0}}{{P}_{B}}{{\left| \sum\nolimits_{b=1}^{{{N}_{B}}}{{{h}_{{{B}_{b}}{{S}_{k}}}}{{w}_{1,b}}} \right|}^{2}} \le x) \notag\\
= &{\rm Pr}(\eta {{T}_{0}}X_1 \le x)
= \Pr \left( {{X}_{1}} \le \frac{x}{\eta {{T}_{0}}} \right)
\end{align}
where ${{X}_{1}}={{P}_{B}}{{\left| \sum\nolimits_{b=1}^{{{N}_{B}}}{{{h}_{{{B}_{b}}{{S}_{k}}}}{{w}_{1,b}}} \right|}^{2}}$. With Rayleigh fading ${{h}_{{{B}_{b}}{{S}_{k}}}}\sim \mathcal{C}\mathcal{N}\left( 0,{{{\bar{\gamma }}}_{B{{S}_{k}}}} \right)$ and  ${{w}_{1,b}}\text{=}{1}/{\sqrt{{{N}_{B}}}}\;$ with $b\in \left\{ 1,\cdots ,{{N}_{B}} \right\}$, we have $\sum\nolimits_{b=1}^{{{N}_{B}}}{{{h}_{{{B}_{b}}{{S}_{k}}}}{{w}_{1,b}}}\text{=}\frac{1}{\sqrt{{{N}_{B}}}}\sum\nolimits_{b=1}^{{{N}_{B}}}{{{h}_{{{B}_{b}}{{S}_{k}}}}}$. It is noted that the sum of finite Gaussian random variables is still a Gaussian random variable \cite{Ding2017}, hence $\sum\nolimits_{b=1}^{{{N}_{B}}}{{{h}_{{{B}_{b}}{{S}_{k}}}}}\sim \mathcal{C}\mathcal{N}\left( 0,{{N}_{B}}{{{\bar{\gamma }}}_{B{{S}_{k}}}} \right)$ , which results in $\frac{1}{\sqrt{{{N}_{B}}}}\sum\nolimits_{b=1}^{{{N}_{B}}}{{{h}_{{{B}_{b}}{{S}_{k}}}}}\sim \mathcal{C}\mathcal{N}\left( 0,{{{\bar{\gamma }}}_{B{{S}_{k}}}} \right)$. Therefore, ${{X}_{1}}$ is an exponentially distributed random variable with the mean of ${{P}_{B}}{{\bar{\gamma }}_{B{{S}_{k}}}}$ with
\begin{align}\label{eq25}
{{f}_{{{X}_{1}}}}\left( y \right)=\frac{1}{{{P}_{B}}{{{\bar{\gamma }}}_{B{{S}_{k}}}}}\exp \left( -\frac{y}{{{P}_{B}}{{{\bar{\gamma }}}_{B{{S}_{k}}}}} \right).
\end{align}
Substituting  \eqref{eq25} into \eqref{eq26}, we prove \eqref{eqlemma1}.
\end{IEEEproof}

By substituting \eqref{eq44} and \eqref{eqlemma1}  into \eqref{eq24}, the transition probability of all sources from ${{\vec s}_{n}}$ to  ${{\vec s}_{{{n'}}}}$ in non-IT mode for ${{\vec s}_{{{n}'}}}\in \Theta _{1}^{n,{n}'}$ can be derived as
\begin{align}\label{eq38}
p_{{\vec s_n} \to {\vec s_{n'}}}^1 = {e^{ - \frac{{{\varepsilon _{\Delta l_k^n}}}}{{\eta {T_0}{P_B}{{\bar \gamma }_{B{S_k}}}}}}}\prod\limits_{k \notin \vartheta _{n'}^L} {_{k = 1}^K} \left( {1 - {e^{ - \frac{{{\varepsilon _1}}}{{\eta {T_0}{P_B}{{\bar \gamma }_{B{S_k}}}}}}}} \right).
\end{align}

\subsection{State Transition When Operating in IT Mode}
If the network works in IT-mode, namely $\vec s_n \in \Theta_2$.
We further assume the condition that $i^*=i$, namely the transfer from $\vec s_n$ to $\vec s_{n'}$ results from the IT selection of $S_i$. Correspondingly, we have $\Theta _{2}^{n,{n}',i^*}=\Theta _{2}^{n,{n}',i}=\left\{ {{\vec s}_{{{n}'}}}:\Delta l_{i}^{n}=-l_{S}^{th};\forall k\ne i,\Delta l_{k}^{n}\ge 0 \right\}$ on this condition.
Then the transition probability of the $k$th source to transfer from ${{\vec s}_{n}}$ to  ${{\vec s}_{{{n'}}}}$  within one step is derived as
\begin{align}\label{}
p_{k}^{n\to {n}',i}=\left\{ \begin{matrix}
   {{F}_{\varepsilon _{k}^{n,i^{}}}}\left( {{\varepsilon }_{\Delta l_{k}^{n}+1}} \right)-{{F}_{\varepsilon _{k}^{n}}}\left( {{\varepsilon }_{\Delta l_{k}^{n}}} \right), & l_{k}^{{{n}'}}\ne L,  \\
   1-{{F}_{\varepsilon _{k}^{n,i^{}}}}\left( {{\varepsilon }_{\Delta l_{k}^{n}}} \right), & l_{k}^{{{n}'}}=L.  \\
\end{matrix} \right.
\end{align}
As such, the transition probability of all sources from ${{\vec s}_{n}}$ to  ${{\vec s}_{{{n'}}}}$ in the IT mode can be written as
\begin{align}\label{}
p_{{{\vec s}_{n}}\to {{\vec s}_{{{n}'}}}}^{2}=p_{i^{}}^{n}\prod\limits_{k \ne i }{_{k=1}^{K}p_{k}^{n\to {n}',i^{}}}.
\end{align}

To derive the source selection probability $p_{i}^{n}$, we present the following Lemma.
\begin{lemma}\label{lemma3}
For the IT mode ${{\vec s}_{n}}\in {{\Theta }_{2}}$,  the probability that the source ${{S}_{i}}$ satisfying $l_{i}^{n}\ge l_{S}^{th}$ to be
selected for information transmission $p_{i}^{n}=\Pr \left( {{S}_{{{i}^{*}}}}={{S}_{i}} \right)$ is derived as
\begin{align}\label{eq33}
p_{i}^{n}=\sum\limits_{{{\mathbf{n}}_{1}}\in \tau _{1,K}^{i}}{\frac{{{\left( -1 \right)}^{\sum\nolimits_{k=1}^{K}{{{n}_{1,k}}}}}}{{{{\bar{\gamma }}}_{{{S}_{i}}D}}\sum\nolimits_{k=1}^{K}{\frac{{{n}_{1,k}}}{{{{\bar{\gamma }}}_{{{S}_{k}}D}}}}+1}},
\end{align}
where $\tau _{1,K}^{i}$ is the set of $K-$length vectors with all its elements as binary numbers, ${{\mathbf{n}}_{1}}$ is a qualified vector in $\tau _{1,K}^{i}$ with its $k$th element satisfying ${{n}_{1,k}}\in \left\{ 0,1 \right\}$ for $k\in \vartheta _{n}^{TH}\backslash \left\{ i \right\}$, and ${{n}_{2,k}}=0$ for $k\notin \vartheta _{n}^{TH}$ and $k=i$.
\end{lemma}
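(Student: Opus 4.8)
The plan is to recognize $p_i^n$ as the probability that the source-to-destination channel power gain of $S_i$ is the largest among all contending sources, and then to evaluate this order-statistic probability in closed form via a standard binomial expansion.

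First I would pin down the distributions. Since $h_{S_kD}$ is Rayleigh with $\mathbb{E}[|h_{S_kD}|^2]=\bar\gamma_{S_kD}$, the gain $Y_k:=|h_{S_kD}|^2$ is exponentially distributed with CDF $F_{Y_k}(y)=1-e^{-y/\bar\gamma_{S_kD}}$ and PDF $f_{Y_k}(y)=\bar\gamma_{S_kD}^{-1}e^{-y/\bar\gamma_{S_kD}}$, and the $\{Y_k\}$ are mutually independent. By the mode-selection procedure of Section~\ref{sec3}, only the energy-sufficient sources, i.e.\ those with $l_k^n\ge l_S^{th}$, equivalently $k\in\vartheta_n^{TH}$, start a timer and contend, and by \eqref{eq6} the winner is the one with the largest $Y_k$. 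Hence $\{S_{i^*}=S_i\}=\{Y_i\ge Y_k\text{ for all }k\in\vartheta_n^{TH}\setminus\{i\}\}$ up to the probability-zero tie event.

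Next, conditioning on $Y_i=y$ and using independence,
\begin{align*}
p_i^n &= \int_0^\infty f_{Y_i}(y)\prod_{k\in\vartheta_n^{TH}\setminus\{i\}} F_{Y_k}(y)\,dy \\
&= \int_0^\infty \frac{e^{-y/\bar\gamma_{S_iD}}}{\bar\gamma_{S_iD}}\prod_{k\in\vartheta_n^{TH}\setminus\{i\}}\bigl(1-e^{-y/\bar\gamma_{S_kD}}\bigr)\,dy.
\end{align*}
I would then expand the finite product into a signed sum over its subsets, which is exactly what the index set $\tau_{1,K}^i$ of binary vectors encodes (with $n_{1,k}\in\{0,1\}$ for $k\in\vartheta_n^{TH}\setminus\{i\}$ and $n_{1,k}=0$ otherwise): the product equals $\sum_{\mathbf{n}_1\in\tau_{1,K}^i}(-1)^{\sum_k n_{1,k}}\exp\bigl(-y\sum_k n_{1,k}/\bar\gamma_{S_kD}\bigr)$. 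Swapping the finite sum with the integral and using $\int_0^\infty e^{-cy}\,dy=1/c$ with $c=\bar\gamma_{S_iD}^{-1}+\sum_k n_{1,k}/\bar\gamma_{S_kD}$, and then absorbing the prefactor $\bar\gamma_{S_iD}^{-1}$ into the denominator, yields precisely \eqref{eq33}.

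All the manipulations are elementary; the only point requiring care is the combinatorial bookkeeping: ensuring the contending set is $\vartheta_n^{TH}\setminus\{i\}$ rather than all $K$ sources, and checking that the zero-padding convention on $\mathbf{n}_1$ (forcing $n_{1,k}=0$ for $k\notin\vartheta_n^{TH}$ and for $k=i$) makes the displayed sums over $k=1,\dots,K$ collapse to sums over the actual contenders. I would also record the convention that an empty product equals $1$, which covers the degenerate case $\vartheta_n^{TH}=\{i\}$, for which $p_i^n=1$.
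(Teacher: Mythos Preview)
Your proposal is correct and follows essentially the same approach as the paper: both condition on the gain $|h_{S_iD}|^2$, multiply by the product of competing CDFs over $\vartheta_n^{TH}\setminus\{i\}$, expand that product as a signed sum over binary vectors, and integrate the resulting exponentials term by term. Your added remarks on the zero-padding convention and the empty-product case are useful bookkeeping but do not depart from the paper's argument.
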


\begin{IEEEproof}
When $l_{i}^{n}\ge l_{S}^{th}$, we denote ${{Z}_{i}}={{\left| {{h}_{{{S}_{i}}D}} \right|}^{2}}$, and ${{{Z}'}_{i}}=\underset{k\in \vartheta _{n}^{TH}\backslash \left\{ i \right\}}{\mathop{\max }}\,\left\{ {{Z}_{k}} \right\}_{k=1}^{K}$ with $i\in \vartheta _{n}^{TH}$, then $p_{i}^{n}$ is given as $p_{i}^{n}=\Pr \left( {{Z}_{i}}>{{{{Z}'}}_{i}} \right)$, which can be calculated as
\begin{align}\label{eq34}
p_{i}^{n}=\int_{0}^{\infty }{{{F}_{{{{{Z}'}}_{i}}}}\left( z \right){{f}_{{{Z}_{i}}}}\left( z \right)dz}.
\end{align}
We first derive
\begin{align}\label{eq35}
{{F}_{{{{{Z}'}}_{i}}}}\left( z \right)=\prod\limits_{k\in \vartheta _{n}^{TH}\backslash \left\{ i \right\}}{_{k=1}^{K}}\left( 1-{{e}^{-\frac{z}{{{{\bar{\gamma }}}_{{{S}_{k}}D}}}}} \right).
\end{align}

Referring to \cite{Yilmaz2013}, ${{F}_{{{{{Z}'}}_{i}}}}\left( z \right)$ can be rewritten as
\begin{align}\label{eq36}
{F_{{{Z'}_i}}}\left( z \right) &= \sum\limits_{{{\bf{n}}_1} \in \tau _{1,K}^i} {\prod\limits_{k = 1}^K {{{\left( { - 1} \right)}^{{n_{1,k}}}}{e^{ - \frac{{{n_{1,k}}}}{{{{\bar \gamma }_{{S_k}D}}}}z}}} } \notag\\
 &= \sum\limits_{{{\bf{n}}_1} \in \tau _{1,K}^i} {{{\left( { - 1} \right)}^{\sum\nolimits_{k = 1}^K {{n_{1,k}}} }}{e^{ - z\sum\nolimits_{k = 1}^K {\frac{{{n_{1,k}}}}{{{{\bar \gamma }_{{S_k}D}}}}} }}}.
\end{align}

In addition, we know that
\begin{align}\label{eq37}
{{f}_{{{Z}_{i}}}}\left( z \right)=\frac{1}{{{{\bar{\gamma }}}_{{{S}_{i}}D}}}{{e}^{-\frac{z}{{{{\bar{\gamma }}}_{{{S}_{i}}D}}}}}.
\end{align}

Substituting  \eqref{eq36} and \eqref{eq37} into \eqref{eq34}, and after some simple manipulations, \eqref{eq33} can be  readily derived.
\end{IEEEproof}

To derive the transition probability of the $k$th source from ${{\vec s}_{n}}$ to  ${{\vec s}_{{{n'}}}}$ in the IT mode, we present the following lemma.
\begin{lemma}\label{lemma2}
The CDF of the energy harvested at the $k$th source with the $n$th state
$\varepsilon _{k}^{n,{{i}^{*}}}$ is derived as
\begin{align}\label{eq27}
{F_{\varepsilon _k^{n,{i^ * }}}}\left( x \right)& = 1 - {e^{ - \frac{x}{{\eta {T_0}{P_B}{{\bar \gamma }_{B{S_k}}}}}}} - \frac{{{P_S}{{\bar \gamma }_{{S_{{i^ * }}}{S_k}}}}}{{{P_B}{{\bar \gamma }_{B{S_k}}} - {P_S}{{\bar \gamma }_{{S_{{i^ * }}}{S_k}}}}} \notag\\
&  \times \left( {{e^{{\mu _1}\frac{x}{{\eta {T_0}}}}} - 1} \right){e^{ - \frac{x}{{\eta {T_0}{P_S}{{\bar \gamma }_{{S_{{i^ * }}}{S_k}}}}}}},
\end{align}
where ${{\mu }_{1}}=\frac{{{P}_{B}}{{{\bar{\gamma }}}_{B{{S}_{k}}}}-{{P}_{S}}{{{\bar{\gamma }}}_{{{S}_{{{i}^{*}}}}{{S}_{k}}}}}{{{P}_{S}}{{{\bar{\gamma }}}_{{{S}_{{{i}^{*}}}}{{S}_{k}}}}{{P}_{B}}{{{\bar{\gamma }}}_{B{{S}_{k}}}}}$.

\end{lemma}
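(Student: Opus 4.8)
The plan is to identify $\varepsilon_k^{n,{i^*}}$ as $\eta T_0$ times the sum of two independent exponential random variables and then invoke the closed form for the CDF of such a sum. First I would split the harvested energy as $\varepsilon_k^{n,{i^*}} = \eta T_0\left(Y_1 + Y_2\right)$ with $Y_1 = P_S\left|h_{S_{i^*}S_k}\right|^2$ and $Y_2 = P_B\left|\mathbf{h}_{BS_k}^{T}\mathbf{w}_{2}\right|^2$. Since $h_{S_{i^*}S_k}\sim\mathcal{C}\mathcal{N}\left(0,\bar\gamma_{S_{i^*}S_k}\right)$, the gain $\left|h_{S_{i^*}S_k}\right|^2$ is exponential with mean $\bar\gamma_{S_{i^*}S_k}$, so $Y_1$ is exponential with mean $P_S\bar\gamma_{S_{i^*}S_k}$. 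Note also that the selection index $i^*$ is determined solely by the source--destination gains $\{|h_{S_kD}|^2\}$, which are independent of both $\mathbf{h}_{BS_k}$ and $h_{S_{i^*}S_k}$; hence conditioning on the selected source does not alter these laws.

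The first substantive step is to pin down the distribution of $Y_2$. The ZF weight $\mathbf{w}_{2}=\mathbf{h}_{BD}^{\dagger}$ is a unit-norm vector that is a deterministic function of $\mathbf{h}_{BD}$ only, and $\mathbf{h}_{BD}$ is independent of $\mathbf{h}_{BS_k}$ because the $B$--$D$ and $B$--$S_k$ links fade independently. Therefore, conditioning on $\mathbf{h}_{BD}$ (equivalently on $\mathbf{w}_{2}$), the scalar $\mathbf{h}_{BS_k}^{T}\mathbf{w}_{2}$ is a fixed linear combination of the i.i.d. entries $h_{B_bS_k}\sim\mathcal{C}\mathcal{N}\left(0,\bar\gamma_{BS_k}\right)$, so by the same reasoning used in the proof of Lemma~\ref{lemma1} it is $\mathcal{C}\mathcal{N}\left(0,\left\|\mathbf{w}_{2}\right\|^2\bar\gamma_{BS_k}\right)=\mathcal{C}\mathcal{N}\left(0,\bar\gamma_{BS_k}\right)$. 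Since this conditional law does not depend on the conditioning value, $Y_2$ is unconditionally exponential with mean $P_B\bar\gamma_{BS_k}$ and independent of $Y_1$.

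Next I would write $F_{\varepsilon_k^{n,{i^*}}}(x)=\Pr\!\left(Y_1+Y_2\le x/(\eta T_0)\right)$ and use the convolution of two independent exponentials with distinct rates $\lambda_1=1/\left(P_S\bar\gamma_{S_{i^*}S_k}\right)$ and $\lambda_2=1/\left(P_B\bar\gamma_{BS_k}\right)$, i.e. $f_{Y_1+Y_2}(z)=\frac{\lambda_1\lambda_2}{\lambda_1-\lambda_2}\left(e^{-\lambda_2 z}-e^{-\lambda_1 z}\right)$, whose integral is $F_{Y_1+Y_2}(z)=1-\frac{\lambda_1}{\lambda_1-\lambda_2}e^{-\lambda_2 z}+\frac{\lambda_2}{\lambda_1-\lambda_2}e^{-\lambda_1 z}$. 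Substituting $z=x/(\eta T_0)$ and using the identifications $\mu_1=\lambda_1-\lambda_2$ and $\frac{\lambda_2}{\lambda_1-\lambda_2}=\frac{P_S\bar\gamma_{S_{i^*}S_k}}{P_B\bar\gamma_{BS_k}-P_S\bar\gamma_{S_{i^*}S_k}}$, a short rearrangement — pulling $e^{-\lambda_1 z}$ out of the term $e^{-\lambda_2 z}-e^{-\lambda_1 z}$ so as to produce the factor $e^{\mu_1 x/(\eta T_0)}-1$ — reproduces \eqref{eq27}.

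The main obstacle I anticipate is not the closing algebra but the rigorous justification of the law of the ZF-beamformed gain $\left|\mathbf{h}_{BS_k}^{T}\mathbf{w}_{2}\right|^2$: one must argue carefully that, although $\mathbf{w}_{2}$ is channel-dependent, its independence from $\mathbf{h}_{BS_k}$ together with $\left\|\mathbf{w}_{2}\right\|=1$ leaves the per-entry variance unchanged, so $Y_2$ remains exponential with mean $P_B\bar\gamma_{BS_k}$. A minor caveat worth stating is that \eqref{eq27} covers the non-degenerate case $\lambda_1\ne\lambda_2$; if $P_B\bar\gamma_{BS_k}=P_S\bar\gamma_{S_{i^*}S_k}$ the sum is Erlang-$2$ and \eqref{eq27} is to be read as the corresponding limit.
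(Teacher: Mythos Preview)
Your proposal is correct and follows essentially the same route as the paper: both identify $\varepsilon_k^{n,i^*}=\eta T_0(Y_1+Y_2)$ with $Y_1$ and $Y_2$ independent exponentials of means $P_S\bar\gamma_{S_{i^*}S_k}$ and $P_B\bar\gamma_{BS_k}$, justify the law of $Y_2$ via the independence of $\mathbf{w}_2=\mathbf{h}_{BD}^{\dagger}$ from $\mathbf{h}_{BS_k}$ together with $\|\mathbf{w}_2\|=1$, and then compute the CDF of the sum. The only cosmetic difference is that the paper writes out the convolution integral $\int_0^{x/(\eta T_0)}F_{X_2}\!\left(\tfrac{x}{\eta T_0}-y\right)f_{X_3}(y)\,dy$ explicitly, whereas you invoke the standard closed form for the sum of two exponentials; your added remarks on why conditioning on $i^*$ is harmless and on the degenerate Erlang-$2$ case are useful refinements the paper omits.
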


\begin{IEEEproof}
We first present the CDF of ${\varepsilon _k^{n,{i^ * }}}$ as
\begin{align}\label{eq31}
{F_{\varepsilon _k^{n,{i^ * }}}}\left( x \right) &= \Pr\left(\eta {{T}_{0}}\left( {{P}_{S}}{{\left| {{h}_{{{S}_{{{i}^{*}}}}{{S}_{k}}}} \right|}^{2}}+{{P}_{B}}{{\left| \mathbf{h}_{B{{S}_{k}}}^{T}\mathbf{w}_{2} \right|}^{2}} \right) \le x\right) \notag\\
&=\Pr({X}_{2}+{X}_{3}\le  \frac{x}{\eta {{T}_{0}}}  )\notag\\
&= \int_0^{\frac{x}{{\eta {T_0}}}} {{F_{{X_2}}}\left( {\frac{x}{{\eta {T_0}}} - y} \right){f_{{X_3}}}\left( y \right)dy},
\end{align}
where ${{X}_{2}}={{P}_{S}}{{\left| {{h}_{{{S}_{{{i}^{*}}}}{{S}_{k}}}} \right|}^{2}}$ and  ${{X}_{3}}\text{=}{{P}_{B}}{{\left| \sum\nolimits_{b=1}^{{{N}_{B}}}{{{h}_{{{B}_{b}}{{S}_{k}}}}w_{2,b}} \right|}^{2}}$. With $\mathbf{w}_{2}=\mathbf{h}_{BD}^{\dagger }$,  the construction of $\mathbf{w}_{2}$ is independent with ${{\mathbf{h}}_{B{{S}_{k}}}}$, so we have $\sum\nolimits_{b=1}^{{{N}_{B}}}{{{h}_{{{B}_{b}}{{S}_{k}}}}w_{2,b}}\sim \mathcal{C}\mathcal{N}\left( 0,{{{\bar{\gamma }}}_{B{{S}_{k}}}}\sum\nolimits_{b=1}^{{{N}_{B}}}{{{\left| w_{2,b} \right|}^{2}}} \right)$ with $\sum\nolimits_{b=1}^{{{N}_{B}}}{{{\left| w_{2,b} \right|}^{2}}}\text{=}1$ \cite{Ding2017}.
As a result, the PDF of ${{X}_{3}}$  is derived as
\begin{align}\label{eqfx3}
{{f}_{{{X}_{3}}}}\left( y \right)=\frac{1}{{{P}_{B}}{{{\bar{\gamma }}}_{B{{S}_{k}}}}}\exp \left( -\frac{y}{{{P}_{B}}{{{\bar{\gamma }}}_{B{{S}_{k}}}}} \right).
\end{align}

Besides, we know that ${{X}_{2}}\sim \exp \left( {{P}_{S}}{{{\bar{\gamma }}}_{{{S}_{{{i}^{*}}}}{{S}_{k}}}} \right)$, and its CDF is written as
\begin{align}\label{eq29}
{{F}_{{{X}_{2}}}}\left( x \right)=1-\exp \left( -\frac{x}{{{P}_{S}}{{{\bar{\gamma }}}_{{{S}_{{{i}^{*}}}}{{S}_{k}}}}} \right).
\end{align}

Substituting \eqref{eqfx3} and \eqref{eq29} into \eqref{eq31}, we prove \eqref{eq27}  in Lemma \ref{lemma2}.
\end{IEEEproof}

By utilizing the results in Lemmas \ref{lemma3} and \ref{lemma2},
the transition probability  from ${{\vec s}_{n}}$ to  ${{\vec s}_{{{n'}}}}$ in IT mode when  ${{\vec s}_{{{n}'}}}\in \Theta _{2}^{n,{n}',i}$ can be derived as
\begin{align}\label{eq39}
p_{{\vec s_n} \to {\vec s_{n'}}}^2 &= \sum\limits_{{{\bf{n}}_1} \in \tau _{1,K}^i} {\frac{{{{\left( { - 1} \right)}^{\sum\nolimits_{k = 1}^K {{n_{1,k}}} }}}}{{{{\bar \gamma }_{{S_i}D}}\sum\nolimits_{k = 1}^K {\frac{{{n_{1,k}}}}{{{{\bar \gamma }_{{S_k}D}}}}}  + 1}}} \notag\\
& \times \prod\limits_{\begin{array}{*{20}{c}}
{k \ne i,}\\
{k \in \vartheta _{n'}^{L}}
\end{array}} {_{k = 1}^K{\Lambda _1}\left( {i,k} \right)} \notag\\
& \times \prod\limits_{\begin{array}{*{20}{c}}
{k \ne i,}\\
{k \notin \vartheta _{n'}^{L}}
\end{array}} {_{k = 1}^K{\Lambda _2}\left( {i,k} \right)},
\end{align}
with
\begin{align}\label{eq40}
{\Lambda _1}\left( {i,k} \right) &= {e^{ - \frac{{{\varepsilon _{\Delta l_k^n}}}}{{\eta {T_0}{P_B}{{\bar \gamma }_{B{S_k}}}}}}} + \frac{{{P_S}{{\bar \gamma }_{{S_i}{S_k}}}}}{{{P_B}{{\bar \gamma }_{B{S_k}}} - {P_S}{{\bar \gamma }_{{S_i}{S_k}}}}} \notag\\
&  \times {e^{ - \frac{{{\varepsilon _{\Delta l_k^n}}}}{{\eta {T_0}{P_S}{{\bar \gamma }_{{S_i}{S_k}}}}}}}\left( {{e^{{\mu _1}\frac{{{\varepsilon _{\Delta l_k^n}}}}{{\eta {T_0}}}}} - 1} \right),
\end{align}
\begin{align}\label{eqLambda2k}
{\Lambda _2}\left( {i,k} \right) &= {\Lambda _1}\left( {i,k} \right) - \frac{{{P_S}{{\bar \gamma }_{{S_i}{S_k}}}{e^{ - \frac{{{\varepsilon _{\Delta l_k^n + 1}}}}{{\eta {T_0}{P_S}{{\bar \gamma }_{{S_i}{S_k}}}}}}}}}{{{P_B}{{\bar \gamma }_{B{S_k}}} - {P_S}{{\bar \gamma }_{{S_i}{S_k}}}}} \notag\\
 & \times \left( {{e^{{\mu _1}\frac{{{\varepsilon _{\Delta l_k^n + 1}}}}{{\eta {T_0}}}}} - 1} \right) - {e^{ - \frac{{{\varepsilon _{\Delta l_k^n + 1}}}}{{\eta {T_0}{P_B}{{\bar \gamma }_{B{S_k}}}}}}},
\end{align}
where $\vartheta _{{{n}'}}^{L}=\left\{ k:l_{k}^{{{n}'}}=L \right\}$ represents the set of sources whose energy level is $L$ at state ${{\vec s}_{{{n'}}}}$.


To conclude, the transition probability  from ${{\vec s}_{n}}$ to  ${{\vec s}_{{{n'}}}}$ is summarized as
\begin{align}\label{}
{{p}_{{{\vec s}_{n}}\to {{\vec s}_{{{n}'}}}}}=\left\{ \begin{matrix}
   p_{{{\vec s}_{n}}\to {{\vec s}_{{{n}'}}}}^{1}, & {{\vec s}_{n}}\in {{\Theta }_{1}},{{\vec s}_{{{n}'}}}\in \Theta _{1}^{n,{n}'},  \\
   p_{{{\vec s}_{n}}\to {{\vec s}_{{{n}'}}}}^{2}, & {{\vec s}_{n}}\in {{\Theta }_{2}},{{\vec s}_{{{n}'}}}\in \Theta _{2}^{n,{n}',i},  \\
   0, & others.  \\
\end{matrix} \right.
\end{align}
Let us denote $\mathbf{A}\in {{\mathbb{R}}^{N\times N}}$ as the state transition matrix of the proposed network, where the $\left( n,{n}' \right)$-th element ${{a}_{n,{n}'}}$ represents the probability to transfer from ${{\vec s}_{n}}$ to  ${{\vec s}_{{{n'}}}}$, and is given by
\begin{align}\label{}
{{a}_{n,{n}'}}={{p}_{{{\vec s}_{n}}\to {{\vec s}_{{{n}'}}}}}.
\end{align}

We then formulate the stationary distribution $\bm{\pi} \in {{\mathbb{R}}^{N\times 1}}$ for the energy states, where its $n$th element, ${{\bm{\pi} }_{n}}$, stands for the stationary probability of state ${{\vec s}_{n}}$ for the network. It is easily to know that $\mathbf{A}$ is irreducible and row stochastic. As a consequence, a unique stationary distribution must exist that satisfies \cite{BiandChen2016,Krikidis2012}
\begin{align}\label{eq49}
\bm{\pi} ={{\mathbf{A}}^{T}}\bm{\pi} .
\end{align}
According to \cite[Eq. (12)]{Krikidis2012}, the solve of \eqref{eq49} could be derived as
\begin{align}\label{pai}
\bm{\pi} ={{\left( {{\mathbf{A}}^{T}}-\mathbf{E}+\mathbf{Q} \right)}^{-1}}\mathbf{b},
\end{align}
where $\mathbf{b}={{\left( 1,1,\cdots 1 \right)}^{T}}$, $\mathbf{E}$ is the identity matrix, and $\mathbf Q$ is an all-ones matrix.

For a better comprehension, Fig. \ref{figA} depicts the block diagram for the construction
of the state transition matrix $\textbf{A}$ based on the system parameters $K$, $L$ and $l_S$. Fig. \ref{figeg} illustrates the transitions of the states for  a simple  example with $K=2$ and $L=2$. The corresponding  state transition matrix $\textbf{A}$ could be derived as \eqref{transA}. By applying the described approach, the stationary state probabilities of all the states, as well as the EOP, COP and ATD for $S_1$ and $S_2$ are obtained as shown in Table \ref{tbeg}. The related parameters are set as $P_B=30$ dBm, $\varepsilon_T=20$ mJ, $\varepsilon_{S}^{th}=10$ mJ, $l_S=l_S^{th}$, $\eta=0.8$, $R_t=3$ bits/s/Hz, $x_B=-3$ m, $x_D=200$ m, $r_S=1$ m, $N_B=5$, $N_0=-80$ dBm, and $\alpha=3$. The coordinates of $B$ and $D$ as well as $S_1$ and $S_2$ are $B=(-3 ,0)$, $D=(200,0)$, $S_1=(-1,0)$ and $S_2=(0,1)$, respectively. The unit of distance is the meter.


 \begin{figure}
 \begin{center}
   \includegraphics[width=3.5in,angle=0]{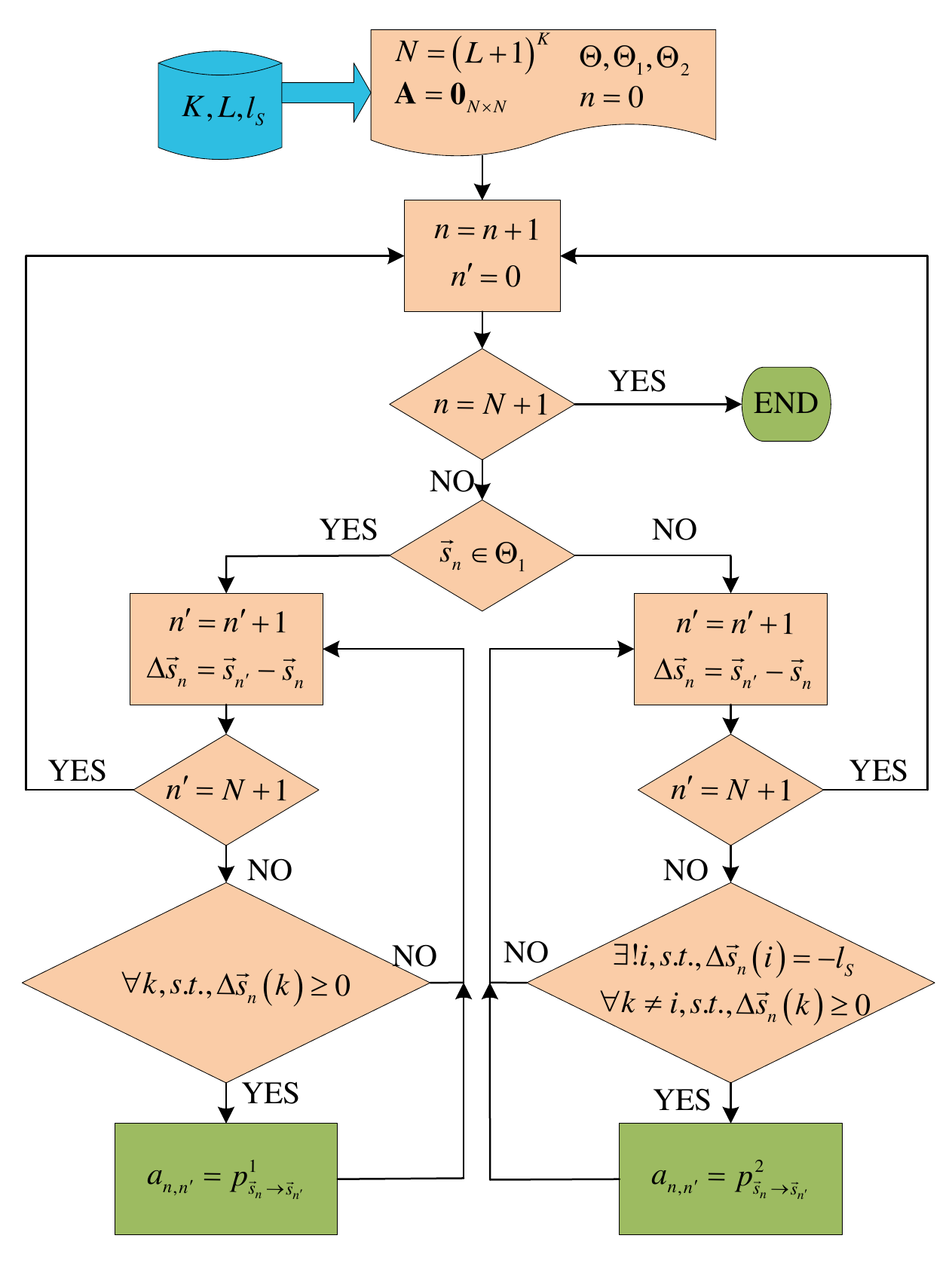}\\
   \caption{Flow diagram for the generation of the state transition matrix \bf{A}.}\label{figA}
 \end{center}
 \end{figure}

 \begin{figure}
 \begin{center}
   \includegraphics[width=3.5in,angle=0]{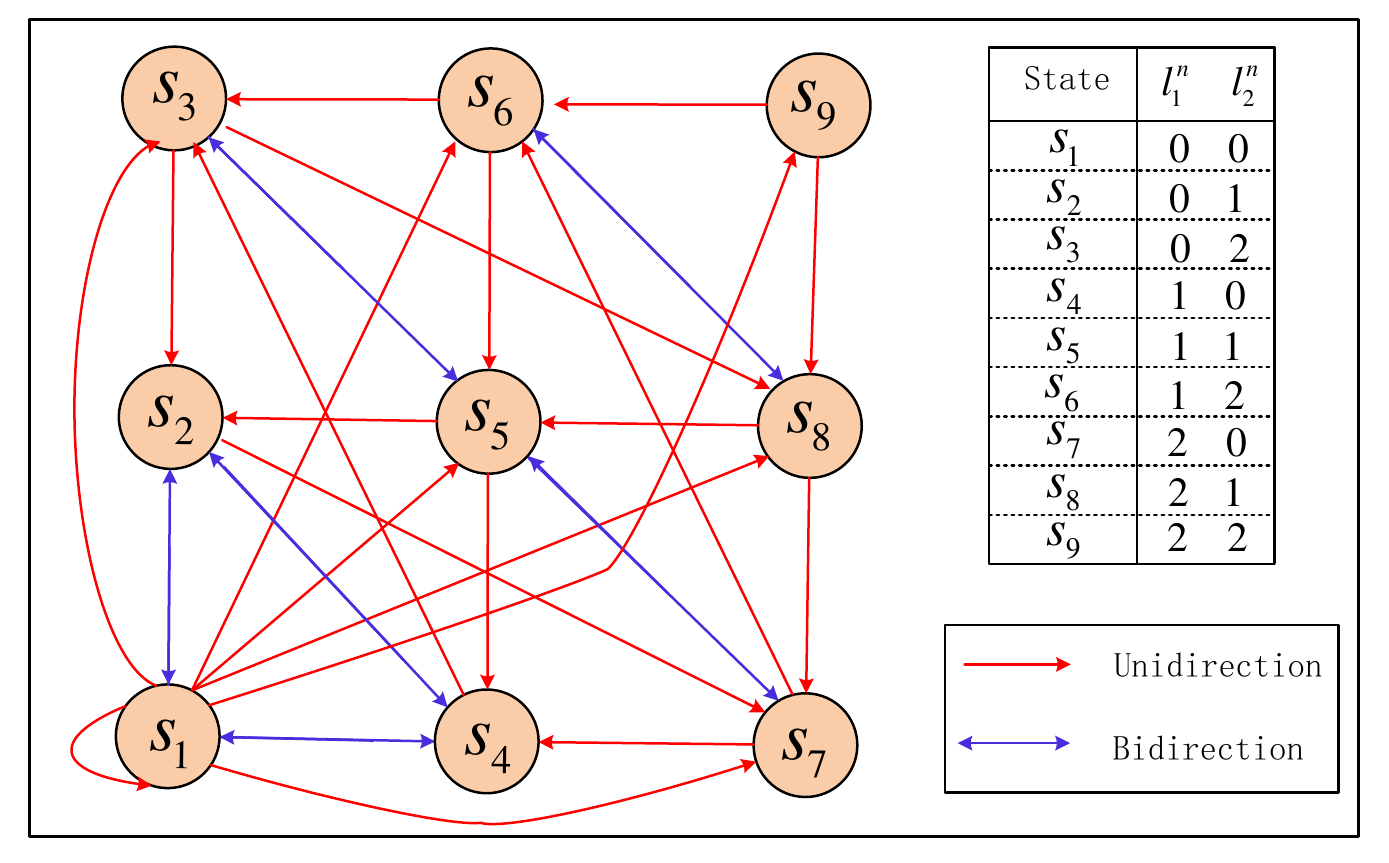}\\
   \caption{State diagram of the Markov chain representing the states of the storages and the transitions between them for a case with $K=2$ and $L=2$.}\label{figeg}
 \end{center}
 \end{figure}

\begin{figure*}[bt!]
\rule{\linewidth}{1pt}
\begin{equation}\label{transA}
\textbf{A} = \left( {\begin{array}{*{20}{c}}
{{\rm{0}}{\rm{.0356}}}&\!\!\!{{\rm{0}}{\rm{.0237}}}&{{\rm{0}}{\rm{.0471}}}&{{\rm{0}}{\rm{.0318}}}&{{\rm{0}}{\rm{.0212}}}&{{\rm{0}}{\rm{.0421}}}&{{\rm{0}}{\rm{.2674}}}&{{\rm{0}}{\rm{.1779}}}&{{\rm{0}}{\rm{.3533}}}\\
{{\rm{0}}{\rm{.0851}}}&\!\!\!0&0&{{\rm{0}}{\rm{.0972}}}&0&0&{{\rm{0}}{\rm{.8177}}}&0&0\\
0&\!\!\!{{\rm{0}}{\rm{.0851}}}&\!\!\!0&0&{{\rm{0}}{\rm{.0972}}}&0&0&{{\rm{0}}{\rm{.817}}7}&0\\
{{\rm{0}}{\rm{.2737}}}&\!\!\!{{\rm{0}}{\rm{.2427}}}&{{\rm{0}}{\rm{.4836}}}&0&0&0&0&0&0\\
0&\!\!\!{{\rm{0}}{\rm{.1358}}}&\!\!\!{{\rm{0}}{\rm{.3604}}}&{{\rm{0}}{\rm{.0429}}}&0&0&{{\rm{0}}{\rm{.4609}}}&0&0\\
0&\!\!\!{\rm{0}}&\!\!\!{{\rm{0}}{\rm{.4963}}}&0&{{\rm{0}}{\rm{.0429}}}&0&0&{{\rm{0}}{\rm{.4609}}}&0\\
0&\!\!\!0&0&{{\rm{0}}{\rm{.2737}}}&{{\rm{0}}{\rm{.2427}}}&{{\rm{0}}{\rm{.4836}}}&0&0&0\\
0&\!\!\!0&0&0&{{\rm{0}}{\rm{.1358}}}&{{\rm{0}}{\rm{.3604}}}&{{\rm{0}}{\rm{.5037}}}&0&0\\
0&\!\!\!0&0&0&0&{{\rm{0}}{\rm{.4963}}}&0&{{\rm{0}}{\rm{.5037}}}&0
\end{array}} \right).
\end{equation}
\setcounter{equation}{38}
\rule{\linewidth}{1pt}
\end{figure*}

\begin{table}[htbp]
  \centering
  \caption{Illustration of example when $K=2$, $L=2$.}
    \begin{tabular}{|c|c|c|c|c|}
    \toprule
    State $s_n$ & [$l_1^n$,$l_2^n$] & $\bm{\pi}_n$ & [$p_1^n$,$P_2^n$] & State COP \\
    \midrule
    $s_1$ & [0,0] & 0.0220 & [0,0] & 0 \\
    $s_2$ & [0,1] & 0.0434 & [0,1] & 0.0545 \\
    $s_3$ & [0,2] & 0.1587 & [0,1] & 0.0545 \\
    $s_4$ & [1,0] & 0.0640 & [1,0] & 0.0553 \\
    $s_5$ & [1,1] & 0.1022 & [0.4963,0.5037] & 0.0030 \\
    $s_6$ & [1,2] & 0.1811 & [0.4963,0.5037] & 0.0030 \\
    $s_7$ & [2,0] & 0.1998 & [1,0] & 0.0553 \\
    $s_8$ & [2,1] & 0.2210 & [0.4963,0.5037] & 0.0030 \\
    $s_9$ & [2,2] & 0.0078 & [0.4963,0.5037] & 0.0030 \\
    \midrule
    \multicolumn{2}{|c|}{\multirow{4}[8]{*}{Derived results}} & EOP   & [$p_{T,1}$,$p_{T,2}$] & Overall COP \\
\cmidrule{3-5}    \multicolumn{2}{|c|}{} & \multirow{3}[6]{*}{0.0220} & [0.5179,0.4601] & \multirow{3}[6]{*}{0.0271} \\
\cmidrule{4-4}    \multicolumn{2}{|c|}{} &       & [$\bar T_1$,$\bar T_2$] &  \\
\cmidrule{4-4}    \multicolumn{2}{|c|}{} &       & [1.9309,2.1734] &  \\
    \bottomrule
    \end{tabular}%
  \label{tbeg}%
\end{table}%


\section{Outage and Delay}\label{sec5}
In this section, we characterize the performance in
terms of outage and delay. Specifically, we focus on the derivations for
 the EOP in the non-IT mode,  the COP, and the average transmission delay (ATD) in the IT model. To reveal key insights of the proposed network, we derive exact expressions for the EOP, COP, and ATD of proposed networks.

\subsection{Energy Outage Probability}

In the proposed network,  the EOP is defined as the network energy outage  in the non-IT mode when  all the sources experience energy outage. The EOP is derived as in the following theorem.
\begin{theorem}
The EOP for the multi-source WPT network is derived as
\begin{align}\label{eqeop}
{{P}_{EO}}=\sum\limits_{{{\vec s}_{n}}\in {{\Theta }_{1}}}{{{\bm{\pi} }_{n}}},
\end{align}
where  $\bm{\pi}_n$ has been derived as in \eqref{pai}.
\end{theorem}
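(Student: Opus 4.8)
The plan is to identify the network-level energy-outage event with membership of the energy-state Markov chain in the subset $\Theta_1$, and then to evaluate its probability under the stationary distribution $\bm{\pi}$ already obtained in \eqref{pai}. First I would recall the operating-mode rule of Section~\ref{sec3}: source $S_k$ is in energy outage at state ${{\vec s}_n}$ exactly when $l_k^n < l_S^{th}$, and the network is declared to be in energy outage (equivalently, it operates in the non-IT mode) precisely when \emph{all} sources are simultaneously in energy outage, i.e. when $l_k^n < l_S^{th}$ for every $k\in\{1,\dots,K\}$. By the definition of $\vartheta_n^{TH}$ this is exactly the condition $\vartheta_n^{TH}={{\Phi }_0}$, which by \eqref{eqfai1} is nothing but ${{\vec s}_n}\in{{\Theta }_1}$; moreover $\Theta_1$ and $\Theta_2$ partition $\Theta$, so the energy-outage event coincides with the disjoint event $\{{{\vec s}_n}\in{{\Theta }_1}\}$ with no contribution from the IT states.

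Next I would invoke stationarity of the energy-state process. Since $\mathbf{A}$ was shown to be irreducible and row stochastic, the chain is ergodic and admits the unique stationary distribution $\bm{\pi}$ given in closed form by \eqref{pai}. By the ergodic theorem for finite irreducible Markov chains, the steady-state probability that the chain occupies any subset $\mathcal{S}\subseteq\Theta$ equals $\sum_{{{\vec s}_n}\in\mathcal{S}}{{\bm{\pi} }_n}$ (equivalently, the long-run fraction of block times spent in $\mathcal{S}$). Specialising to $\mathcal{S}={{\Theta }_1}$ and combining with the identification from the first step gives
\begin{align}
{{P}_{EO}}=\Pr\left({{\vec s}_n}\in{{\Theta }_1}\right)=\sum_{{{\vec s}_n}\in{{\Theta }_1}}{{\bm{\pi} }_n},
\end{align}
which is \eqref{eqeop}.

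I do not expect a genuine obstacle here; the theorem is essentially a bookkeeping statement once the Markov model and its stationary law are in place. The only point needing a line of care is verifying that the explicit listing of ${{\Theta }_1}$ in the text — states whose coordinates range over $\{0,\dots,l_S-1\}$, so $N_1=(l_S)^K$ — captures all and only the states with $\vartheta_n^{TH}={{\Phi }_0}$. This holds because $\vartheta_n^{TH}=\{k:l_k^n\ge l_S^{th}\}$ is empty iff $l_k^n\in\{0,\dots,l_S^{th}-1\}$ for every $k$, the $K$ per-source energy levels being independent coordinates, and under the threshold convention $l_S=l_S^{th}$ adopted throughout the two descriptions agree. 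With that identification settled, \eqref{eqeop} follows immediately from the law of total probability and the stationary-distribution formula \eqref{pai}.
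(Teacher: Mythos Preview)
Your proposal is correct and takes essentially the same approach as the paper, which simply notes that by the definition of network energy outage in Section~\ref{sec3} the EOP is immediate. You provide more detail --- explicitly identifying the outage event with $\{\vec s_n\in\Theta_1\}$ via $\vartheta_n^{TH}=\Phi_0$ and invoking ergodicity of the finite irreducible chain to justify summing the stationary probabilities --- but this is exactly the reasoning the paper's one-line proof appeals to implicitly.
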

\begin{proof}
According to the definition of the network energy outage given in \ref{sec3}, the EOP of the proposed network is readily derived.
\end{proof}

\begin{corollary}
The EOP for the multi-source WPT network when
the transmit power of the PB goes to infinity (${P_B} \to \infty$) is given by
\begin{align}
{P_{EO}} = \left\{ {\begin{array}{*{20}{c}}
{\sum\limits_{{{\vec s}_n} \in {\Theta _1}} {{\bm{\pi} _n}} ,}&{K = 1,}\\
{0,}&{K \ge 2,}
\end{array}} \right.
\end{align}
where  $\bm{\pi}_n$ has been derived as in \eqref{pai}.
\end{corollary}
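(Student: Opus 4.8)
The plan is to push the limit $P_B\to\infty$ through the stationary balance equation $\bm\pi=\mathbf A^{T}\bm\pi$ of \eqref{eq49}, exploiting the fact — already visible in Lemmas \ref{lemma1} and \ref{lemma2} — that the harvested energy at any source diverges as $P_B\to\infty$, so that a source which harvests is driven to the top level $L$ with probability tending to one. The $K=1$ branch is then essentially a remark that nothing special happens, while the $K\ge2$ branch is where the work is.

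First I would record the structural fact that every $\vec s_n\in\Theta_1$ has all $K$ of its coordinates strictly below the transmit threshold, $l_k^n<l_S^{th}$, and since $l_S^{th}\le L$ this forces $l_k^n\neq L$ for every $k$: no source is at the full level at a non-IT state. Next, from \eqref{eqlemma1} one has $F_{\varepsilon_k^n}(x)=1-\exp\!\big(-x/(\eta T_0P_B\bar\gamma_{BS_k})\big)\to0$ for each fixed $x$ as $P_B\to\infty$, and from \eqref{eq27} a short check — the prefactor $P_S\bar\gamma_{S_{i^{*}}S_k}/(P_B\bar\gamma_{BS_k}-P_S\bar\gamma_{S_{i^{*}}S_k})\to0$ while $\mu_1\to1/(P_S\bar\gamma_{S_{i^{*}}S_k})$ keeps the remaining exponential bounded — gives $F_{\varepsilon_k^{n,i^{*}}}(x)\to0$ for each fixed $x$ as well. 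Substituting these into \eqref{eq44} and into its IT-mode analogue, every per-source transition probability onto a non-full level $l_k^{n'}\neq L$ tends to $0$. Now take $K\ge2$: a non-IT target state $\vec s_n\in\Theta_1$ has no full coordinate, and in the IT mode at least $K-1\ge1$ of the $K$ sources harvest, so each of the factor-products defining $p_{\vec s_m\to\vec s_n}^{1}$ in \eqref{eq38} and $p_{\vec s_m\to\vec s_n}^{2}$ in \eqref{eq39} contains at least one vanishing factor; hence $p_{\vec s_m\to\vec s_n}(P_B)\to0$ for every $m$ and every $\vec s_n\in\Theta_1$.

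Then I would read off row $n$ of \eqref{eq49}, namely $\bm\pi_n=\sum_{m=1}^{N}p_{\vec s_m\to\vec s_n}\,\bm\pi_m$; since $\bm\pi$ is a probability vector, $\bm\pi_n\le\sum_{m=1}^{N}p_{\vec s_m\to\vec s_n}(P_B)$, a finite sum of terms each tending to $0$, so $\bm\pi_n(P_B)\to0$ for all $\vec s_n\in\Theta_1$ and therefore $P_{EO}=\sum_{\vec s_n\in\Theta_1}\bm\pi_n\to0$ by \eqref{eqeop}. (Structurally this is just the statement that, in the limiting chain, $\Theta_2$ is a closed set and every $\Theta_1$ state is transient, hence carries no stationary mass.) For $K=1$ the argument deliberately breaks: the product over the harvesting sources in \eqref{eq39} is empty and equals $1$, and the lone source is always the one selected, so the transitions from the IT states back into $\Theta_1$ do not vanish; the chain keeps a single recurrent class and a unique stationary distribution, so \eqref{pai} and \eqref{eqeop} still hold verbatim and $P_{EO}=\sum_{\vec s_n\in\Theta_1}\bm\pi_n$, now with $\bm\pi$ evaluated at the $P_B\to\infty$ transition matrix.

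The hard part will be making the interchange of limits clean in the $K\ge2$ case: one must verify that \emph{every} transition probability into a $\Theta_1$ state vanishes, which is precisely why both ingredients are needed — the structural fact that $\Theta_1$ states have no full coordinate, and the count that $K-1\ge1$ sources harvest in the IT mode. Once that is in place, the elementary squeeze on the stationary equation avoids any appeal to continuity of the matrix inverse in \eqref{pai}, and the $K=1$ degeneracy, where an empty product refuses to vanish, is exactly the obstruction that keeps $P_{EO}$ positive there.
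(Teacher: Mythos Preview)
Your proof is correct and takes essentially the same approach as the paper: both argue that as $P_B\to\infty$ every harvesting source is driven to level $L$, so for $K\ge2$ no transition can land in $\Theta_1$ (whose states have no full coordinate), and then invoke the stationary balance \eqref{eq49} to conclude $P_{EO}=0$; for $K=1$ both note that the lone source cannot harvest while it transmits, so transitions back into $\Theta_1$ persist. The only cosmetic differences are that the paper phrases the vanishing of incoming transitions via the block form $\mathbf{A}_{1,1}=\mathbf{A}_{2,1}=\mathbf{0}$ and identifies the limiting destinations explicitly, whereas you work through the CDF limits of Lemmas~\ref{lemma1}--\ref{lemma2} and finish with the squeeze $\bm\pi_n\le\sum_m p_{\vec s_m\to\vec s_n}\to0$.
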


\begin{proof}
We fisrt denote $\mathbf{A}=\left( \begin{matrix}
   {{\mathbf{A}}_{1,1}} & {{\mathbf{A}}_{1,2}}  \\
   {{\mathbf{A}}_{2,1}} & {{\mathbf{A}}_{2,2}}  \\
\end{matrix} \right) $, where  	${{\mathbf{A}}_{i,j}}$ represents the transition matrix from ${{\Theta }_{i}}$
to ${{\Theta }_{j}}$, $i,j\in \left\{ 1,2 \right\}$. We also denote the stationary distributions of states in $\Theta_1$ and   $\Theta_2$  as $\bm{\pi}_{1}^{\Theta} \in \mathbb{R}^{N_1 \times 1}$ and $\bm{\pi}_{2}^{\Theta}\in \mathbb{R}^{N_2 \times 1}$, respectively.

When ${P_B} \to \infty$, we have $\varepsilon _{k}^{n}\to \infty $, $\varepsilon _{k}^{n,{{i}^{*}}}\to \infty $. Therefore, if $\vec s_n \in \Theta_1$, it will always transfer to the all-full state $[L,\cdots,L]$, because the harvested energy at each source will always exceed the energy capacity. Similarly, if $\vec s_n \in \Theta_2$, it will always transfer to an almost-all-full state $[L,\cdots,l_{i^*},\cdots,L]$, where $0 \le l_{i^*} \le L-l_S$ and $i^* \in \{1, \cdots, K\}$ is the index of the selected IT source. Note that both $[L,\cdots, L]$ and $[L,\cdots,l_{i^*},\cdots,L]$ are an element of $\Theta_2$. As a result, for $K\ge 2$, regardless of any current state the network remains, it will never transfer to  $\Theta_1$, which results to ${{\mathbf{A}}_{1,1}}={{\mathbf{0}}_{{{N}_{1}}\times {{N}_{1}}}}$ and ${{\mathbf{A}}_{2,1}}={{\mathbf{0}}_{{{N}_{2}}\times {{N}_{1}}}}$.
Substituting the derived results into \eqref{eq49}, we derive the matrix-form equation as
\begin{equation}\label{}
{\left( {\begin{array}{*{20}{c}}
{{{\bf{0}}_{{N_1} \times {N_1}}}}&{{{\bf{A}}_{1,2}}}\\
{{{\bf{0}}_{{N_2} \times {N_1}}}}&{{{\bf{A}}_{2,2}}}
\end{array}} \right)^T}\left( {\begin{array}{*{20}{c}}
{{\bm{\pi}_{1}^{\Theta}}}\\
{{\bm{\pi}_{2}^{\Theta}}}
\end{array}} \right) = \left( {\begin{array}{*{20}{c}}
{{\bm{\pi}_{1}^{\Theta}}}\\
{{\bm{\pi}_{2}^{\Theta}}}
\end{array}} \right),
\end{equation}
which yields to $\bm{\pi}_{1}^{\Theta}=\bf{0}$. Referring to \eqref{eqeop}, we derive the EOP of the network as
\begin{equation}\label{}
{P_{EO}}\mathop  =  \sum\limits_{{\vec s_n} \in {\Theta _1}} \bm{\pi}_n  = \sum\limits_{k = 1}^{{N_1}} {\bm{\pi} _{1,k}^\Theta }  = 0,
\end{equation}
where ${\bm{\pi} _{1,k}^\Theta }$ denotes the $k$th element of ${\bm{\pi} _{1}^\Theta}$.

Whereas, for $K=1$, due to the half-duplex nature, the single source can not harvest energy when it transmits information. Hence, if the network remains in $\Theta_2$, the source will always consumed energy until an energy outage event occurs. As such, the EOP when $K=1$ is derived using \eqref{eqeop}.
\end{proof}

\subsection{Connection Outage Probability}
The COP quantifies the probability that the information can not be correctly decoded at the legitimate receiver when the IT operation actually takes place.
According to the total probability theorem, and considering the fact that no data is transmitted when energy outage occurs.
\begin{theorem}
The overall COP for the multi-source WPT network is derived as
\begin{align}
P_{CO}=\sum\limits_{{{\vec s}_{n}}\in {{\Theta_2 }}}{{{\bm{\pi} }_{n}}}\prod\limits_{k \in \vartheta _n^{TH}} {_{k = 1}^K\left( {1 - \exp \left( { - \frac{{\sigma _D^2\gamma _{th}^t}}{{{{\bar \gamma }_{{S_k}D}}{P_S}}}} \right)} \right)},
\end{align}
where  $\bm{\pi}_n$ has been derived as in \eqref{pai}.
\end{theorem}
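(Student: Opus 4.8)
The plan is to apply the total-probability theorem over the stationary distribution $\bm\pi$ and then carry out a single-slot evaluation of the connection-outage probability in each IT state. First I would write $P_{CO}=\sum_{n=1}^{N}\bm\pi_n\Pr\!\left(\text{connection outage}\mid\vec s_n\right)$. When $\vec s_n\in\Theta_1$ the network is in the non-IT mode and no data is transmitted, so this conditional event is empty and contributes $0$; hence only the states $\vec s_n\in\Theta_2$ survive, and $P_{CO}=\sum_{\vec s_n\in\Theta_2}\bm\pi_n P_{CO}^{(n)}$, where $P_{CO}^{(n)}$ is the connection-outage probability conditioned on the network occupying state $\vec s_n$.

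Next I would evaluate $P_{CO}^{(n)}$ for a fixed $\vec s_n\in\Theta_2$. By the selection rule \eqref{eq6}, the active link is the one from $S_{i^*}$ to $D$ with $i^*=\arg\max_{k\in\vartheta_n^{TH}}|h_{S_kD}|^2$, so the SNR at $D$ in \eqref{eq7} becomes $\gamma_D^{n,i^*}=\frac{P_S}{\sigma_D^2}\max_{k\in\vartheta_n^{TH}}|h_{S_kD}|^2$ (with $\sigma_D^2=N_0$). A connection outage occurs precisely when $\log_2\!\left(1+\gamma_D^{n,i^*}\right)<R_t$, i.e. when $\max_{k\in\vartheta_n^{TH}}|h_{S_kD}|^2<\sigma_D^2\gamma_{th}^t/P_S$ with $\gamma_{th}^t=2^{R_t}-1$ the SNR threshold associated with the target rate $R_t$. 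Because the $S_k$--$D$ links are quasi-static Rayleigh, the $|h_{S_kD}|^2$ are mutually independent and exponentially distributed with means $\bar\gamma_{S_kD}$ (the distributional facts already used in \eqref{eq35}--\eqref{eq37}), and because the current-slot channels $\{h_{S_kD}\}$ are independent of $\vec s_n$ --- the state $\vec s_n$ being determined only by quantities from earlier slots and the admissible set $\vartheta_n^{TH}$ by the energy levels of $\vec s_n$ alone --- the CDF of the maximum factorizes, giving
\[
P_{CO}^{(n)}=\prod_{k\in\vartheta_n^{TH}}\Pr\!\left(|h_{S_kD}|^2<\frac{\sigma_D^2\gamma_{th}^t}{P_S}\right)=\prod_{k\in\vartheta_n^{TH}}\left(1-\exp\!\left(-\frac{\sigma_D^2\gamma_{th}^t}{\bar\gamma_{S_kD}P_S}\right)\right).
\]
Substituting this back and inserting $\bm\pi_n$ from \eqref{pai} yields the claimed expression.

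The algebra is just the one-line CDF computation above; the steps that deserve care are conceptual. First, because the selection in \eqref{eq6} keeps the strongest admissible $S_k$--$D$ link, the outage event is exactly ``all admissible links lie below threshold'', which is what turns $P_{CO}^{(n)}$ into a product over the whole set $\vartheta_n^{TH}$ rather than a single term for one pre-fixed source. Second, one must verify that conditioning on the Markov state $\vec s_n$ does not bias the distribution of the current-slot data channels, so that $P_{CO}^{(n)}$ depends on $\vec s_n$ only through $\vartheta_n^{TH}$; this rests on the block-fading assumption that the channel coefficients change independently from one packet time to the next, and it is what legitimizes weighting $P_{CO}^{(n)}$ by the stationary probability $\bm\pi_n$. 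I expect this decoupling step to be the main, albeit mild, obstacle.
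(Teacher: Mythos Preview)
Your proposal is correct and follows essentially the same route as the paper: total probability over the stationary distribution, reduction to $\Theta_2$ because no data is sent in non-IT mode, and then the per-state COP as the CDF of $\max_{k\in\vartheta_n^{TH}}|h_{S_kD}|^2$ evaluated at $\sigma_D^2\gamma_{th}^t/P_S$, which factorizes into the product of exponential CDFs. The only cosmetic difference is that the paper first conditions on the identity of the selected source via $p_i^n$ and then observes that the resulting CDF does not depend on $i$ (so $\sum_{i}p_i^n=1$ collapses the sum), whereas you go directly from ``$i^*$ is the argmax'' to ``outage $\Leftrightarrow$ every admissible link is below threshold''; your shortcut is cleaner and avoids that detour.
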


\begin{proof}
According to the total probability theorem, the overall COP of the multi-source WPT network can be calculated as
\begin{align}\label{eq52}
P_{CO}=\sum\limits_{{{\vec s}_{n}}\in {{\Theta }}}{{{\bm{\pi} }_{n}}}P_{CO,n} \mathop  = \limits^{\left( a \right)} \sum\limits_{{{\vec s}_{n}}\in {{\Theta }_{2}}}{{{\bm{\pi} }_{n}}}P_{CO,n}.
\end{align}
where the result after $\mathop  = \limits^{\left( a \right)}$ is derived according to the fact that no information would be transmitted  when energy outage occurs, and  $P_{CO,n}$ represents the COP  when the network remains at state ${{\vec s}_{n}}$,  which is derived as
\begin{align}\label{eq53}
P_{CO,n}&=\sum\limits_{i\in \vartheta _{n}^{TH}}{\Pr \left( {{i}^{*}}=i \right)}\Pr \left( \gamma _{D}^{n,{{i}^{*}}}<\gamma _{th}^{t} \right).\notag\\
&=\sum\limits_{i\in \vartheta _{n}^{TH}}{p_{i}^{n}}{{F}_{\gamma _{D}^{n,{{i}^{*}}}}}\left( \gamma _{th}^{t} \right)
\end{align}
where $\gamma _{th}^{t}={{2}^{{{R}_{t}}}}-1$, ${{R}_{t}}$ (bits/s/Hz) denotes the transmission rate of the network.

According to the selection policy described in \eqref{eq6}, we can present the CDF of ${\gamma _{D}^{n,{{i}^{*}}}}$ as
\begin{align}\label{eq54}
{{F}_{\gamma _{D}^{n,{{i}^{*}}}}}\left( x \right)={{F}_{{{Y}_{SD}}}}\left( \sigma _{D}^{2}x \right),
\end{align}
where ${{Y}_{SD}}={{P}_{S}}\underset{k\in \vartheta _{n}^{TH}}{\mathop{\max }}\,\left\{ {{\left| {{h}_{{{S}_{k}}D}} \right|}^{2}} \right\}$. After some  manipulations, the CDF of $Y_{SD}$ is derived as
\begin{align}\label{eq55}
{{F}_{{{Y}_{SD}}}}\left( y \right)=\prod\limits_{k\in \vartheta _{n}^{TH}}{_{k=1}^{K}\left( 1-\exp \left( -\frac{y}{{{{\bar{\gamma }}}_{{{S}_{k}}D}}{{P}_{S}}} \right) \right)}.
\end{align}
It is easy to find from \eqref{eq54} and \eqref{eq55} that the CDF of ${{\gamma _{D}^{n,{{i}^{*}}}}}$ has no relationship with the selection probabilities of every  source $S_i$, $i \in \vartheta _{n}^{TH}$. Besides, according to the total probability theorem, we derive $\sum\limits_{i\in \vartheta _{n}^{TH}}p_i^n=1$. Hence, $P_{CO,n}$  can be calculated as

\begin{align}\label{pcon}
P_{CO,n}&={{F}_{\gamma _{D}^{n,{{i}^{*}}}}}\left( \gamma _{th}^{t} \right)\notag\\
&=\prod\limits_{k \in \vartheta _n^{TH}} {_{k = 1}^K\left( {1 - \exp \left( { - \frac{{\sigma _D^2\gamma _{th}^t}}{{{{\bar \gamma }_{{S_k}D}}{P_S}}}} \right)} \right)}.
\end{align}

%
%

By substituting ${{\bm{\pi }}_{n}}$ in \eqref{pai} and $P_{CO,n}$ in \eqref{pcon} into \eqref{eq52}, the COP of the proposed network is derived.
\end{proof}

\subsection{Average Transmission Delay}
IN the IT mode,  there would be at most one source to send messages at each time slot, a transmission delay is caused at each source. In practical, we may concern that how many time slots on average a specific source need to wait for to be selected for IT operation, which can be quantified by the average transmission delay (ATD).

Before delving into the investigation, we will clarify the fundamental conception of ATD by giving out a simple example. Let us start by looking into  the network of $K$ energy-sufficient sources, where all the sources can be selected for IT operation equally. It is readily known that at each time slot, each source has the transmision probability of $1/K$. In other words, for each source, a time slot is allocated once on average within $K$ slots. As a consequence,  the ATD would be $\bar T=KT_0$ for every source in this network \footnote{For the extreme case of $K=1$, we can find that the source can always transmit successively. We say that the ATD of this network is $\bar T=T_0$, even though no time slot is needed to wait for IT operation.}.
However, in our proposed energy storage networks, whether a specific source can be selected for IT operation differs for different storage states. In other words, the transmission probability of a certain source is not fixed, and all the sources do not have the equal transmission probability as well.

In order to solve this problem, we denote $p_{T,i}^n$ as the transmission probability for source $S_i$ at state $s_n$. According to the previous description, $p_{T,i}^n$ can be derived as
\begin{equation}\label{ptin}
p_{T,i}^n = \left\{ {\begin{array}{*{20}{c}}
{0,}&{l_i^n < {l_S},}\\
{p_i^n,}&{l_i^n \ge {l_S}.}
\end{array}} \right.
\end{equation}

\begin{theorem}
The average transmission probability for source $S_i$ and its ATD are  derived as
%
\begin{equation}\label{}
{{p}_{T,i}}=\sum\limits_{{{\vec s}_{n}}\in \Theta }{{{\bm{\pi} }_{n}}p_{T,i}^{n}}=\sum\limits_{{{\vec s}_{n}}\in {{\Theta }_{2}}}{{{\bm{\pi} }_{n}}p_{T,i}^{n}},
\end{equation}
and
\begin{equation}\label{}
{\bar T_i} = \frac{{{{T}_0}}}{{{p_{T,i}}}} = \frac{{{{T}_0}}}{{\sum\limits_{{\vec s_n} \in {\Theta _2}} {{\bm{\pi} _n}p_{t,i}^n} }},
\end{equation}
respectively, where  $\bm{\pi}_n$ has been derived as in \eqref{pai}.
\end{theorem}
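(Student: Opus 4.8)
The plan is to establish the two claimed formulas in sequence, treating the average transmission probability $p_{T,i}$ first and then obtaining $\bar T_i$ as an immediate corollary via a renewal-type argument. For $p_{T,i}$, I would start from the definition of the long-run fraction of time slots in which $S_i$ is the source selected for IT. Since the energy-state process is an ergodic finite-state Markov chain with transition matrix $\mathbf A$ and stationary distribution $\bm\pi$ derived in \eqref{pai}, the long-run frequency of being in state $\vec s_n$ equals $\bm\pi_n$. Conditioned on being in state $\vec s_n$, the probability that $S_i$ is selected for IT in that slot is exactly $p_{T,i}^n$ as given in \eqref{ptin}: it is $0$ when $l_i^n<l_S$ (the source lacks enough energy to be a candidate) and $p_i^n$ (the selection probability from Lemma \ref{lemma3}) when $l_i^n\ge l_S$. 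Applying the law of total probability over all states and then invoking ergodicity yields
\begin{equation}\label{}
p_{T,i}=\sum_{\vec s_n\in\Theta}\bm\pi_n p_{T,i}^n.
\end{equation}
The reduction of the sum from $\Theta$ to $\Theta_2$ is then purely a consequence of \eqref{ptin}: for every $\vec s_n\in\Theta_1$ we have $\vartheta_n^{TH}=\Phi_0$, so in particular $l_i^n<l_S^{th}\le l_S$ for every $i$, whence $p_{T,i}^n=0$ and those terms drop out of the sum.

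For the ATD, I would argue that the return times of the event ``$S_i$ is selected for IT'' along the state trajectory form a (delayed) renewal sequence, so by the elementary renewal theorem the mean inter-selection time equals the reciprocal of the asymptotic selection rate. Since each slot has duration $T_0$ and the asymptotic selection rate is $p_{T,i}$, the mean number of slots between successive selections of $S_i$ is $1/p_{T,i}$, i.e. $\bar T_i = T_0/p_{T,i}$. Substituting the $\Theta_2$-restricted expression for $p_{T,i}$ gives the stated formula. (The degenerate case $K=1$, where $S_i$ may be selected in consecutive slots, is handled by the convention already stated in the footnote; no modification of the formula is needed because $p_{T,i}$ is still well-defined and positive.)

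The main obstacle, and the point requiring the most care in a fully rigorous writeup, is justifying the interchange ``stationary probability of state $=$ long-run frequency'' together with ``$\bar T_i = T_0/p_{T,i}$'' — that is, passing from the one-step conditional description \eqref{ptin} to a statement about averages over the trajectory. This is exactly where irreducibility and aperiodicity (or at least positive recurrence) of $\mathbf A$ are invoked; the excerpt already notes $\mathbf A$ is irreducible and row-stochastic and that a unique $\bm\pi$ exists, so I would cite that together with the ergodic theorem for finite Markov chains and the elementary renewal theorem. Everything else is bookkeeping: the vanishing of $\Theta_1$-terms is immediate from $l_i^n<l_S$ on $\Theta_1$, and $p_i^n$ is already in closed form from Lemma \ref{lemma3}, so no further computation is required.
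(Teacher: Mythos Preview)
Your proposal is correct. The paper's own proof for this theorem reads in its entirety ``The proof is omitted,'' so there is nothing to compare against: your argument (ergodicity of the finite-state chain for the total-probability formula, the $\Theta_1$-terms vanishing from \eqref{ptin}, and the elementary renewal identity $\bar T_i=T_0/p_{T,i}$) supplies exactly the justification the paper leaves implicit.
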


\begin{proof}
The proof is omitted.
\end{proof}

\section{Numerical Results}\label{sec6 NUMERICAL RESULTS}
In this section, we present the numerical results  to illustrate the
impacts of various system parameters on the  performance of the  proposed network. As shown in the below  figures, the theoretical results are in exact
agreement with the numerical simulations, which show  the
correctness of the analysis. Without any loss of generality, all the nodes are set in a two-dimensional plane in all simulations, and the coordinates of $B$ and $D$ are set as $B=(x_B,0)$ and $D=(x_D,0)$, and the coordinates of the source nodes are assumed to be $S_1=(-r_S,0)$, $S_2=(0,r_S)$, $S_3=(r_S,0)$, $S_4=(0,-r_S)$, $S_5=({\textstyle{{\sqrt 2 } \over 2}} r_S,{\textstyle{{\sqrt 2 } \over 2}} r_S)$ and $S_6=(-{\textstyle{{\sqrt 2 } \over 2}} r_S,-{\textstyle{{\sqrt 2 } \over 2}} r_S)$, respectively. With $K$ sources, we take sources from $S_1$ to $S_K$ in order automatically, and we set $\alpha=3$, $R_t=3$ bits/s/Hz, $N_B=5$, $N_0=-80$ dBm, and $\varepsilon_S^{th}=10$ mJ.

\begin{figure}
\begin{center}
  \includegraphics[width=3.5in,angle=0]{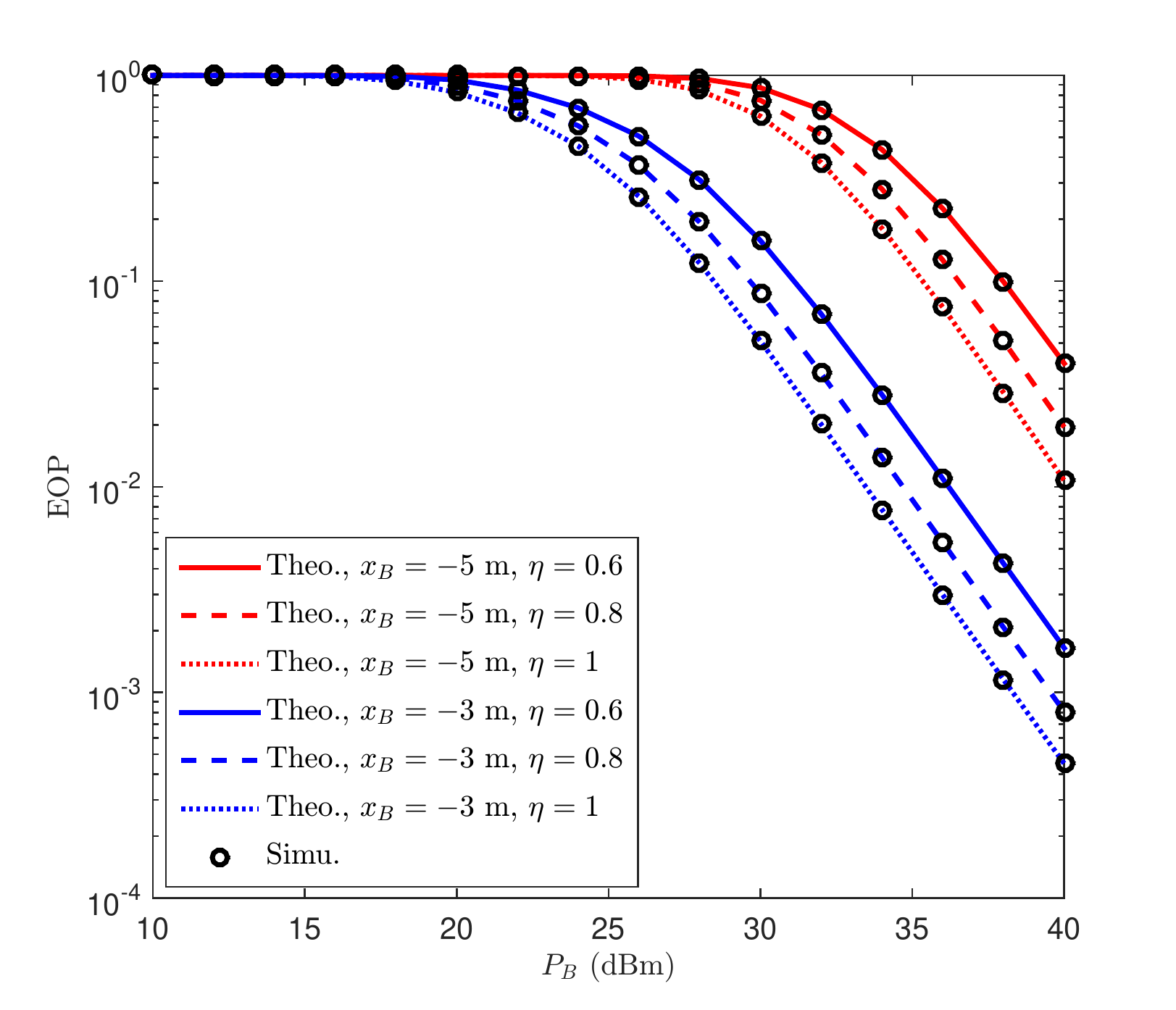}\\
  \caption{EOP of the multi-source WPT network versus the transmit power of power beacon $P_B$  with different $x_B$ and $\eta$. $P_B=30$ dBm, $l_S=l_S^{th}$, $\varepsilon_T=40$mJ, $x_D=200$ m, and $r_S=1$ m.}\label{figXvecPB_dBS_eta_change}
\end{center}
\end{figure}

Fig.  \ref{figXvecPB_dBS_eta_change} plots the EOP of the multi-source WPT network versus the transmit power of the power beacon $P_B$ with different $x_B$ and $\eta$.  As can be seen from this  figures, the EOP declines rapidly when $P_B$ increases.
Besides, it shows that the EOP will grow severely when $x_B$ increases. Moreover, the EOP  also raises when $\eta$ becomes smaller. This can be well understood because a greater $x_B$ implies a farther distance of energy transmission, which results in the decline of  accessible energy that can be harvested by sources due to a much severer path poss. Likewise, a smaller $\eta$ means a lower energy efficiency, which indicates that less energy can be converted by sources and saved in their storages.

\begin{figure}
\begin{center}
  \includegraphics[width=3.5in,angle=0]{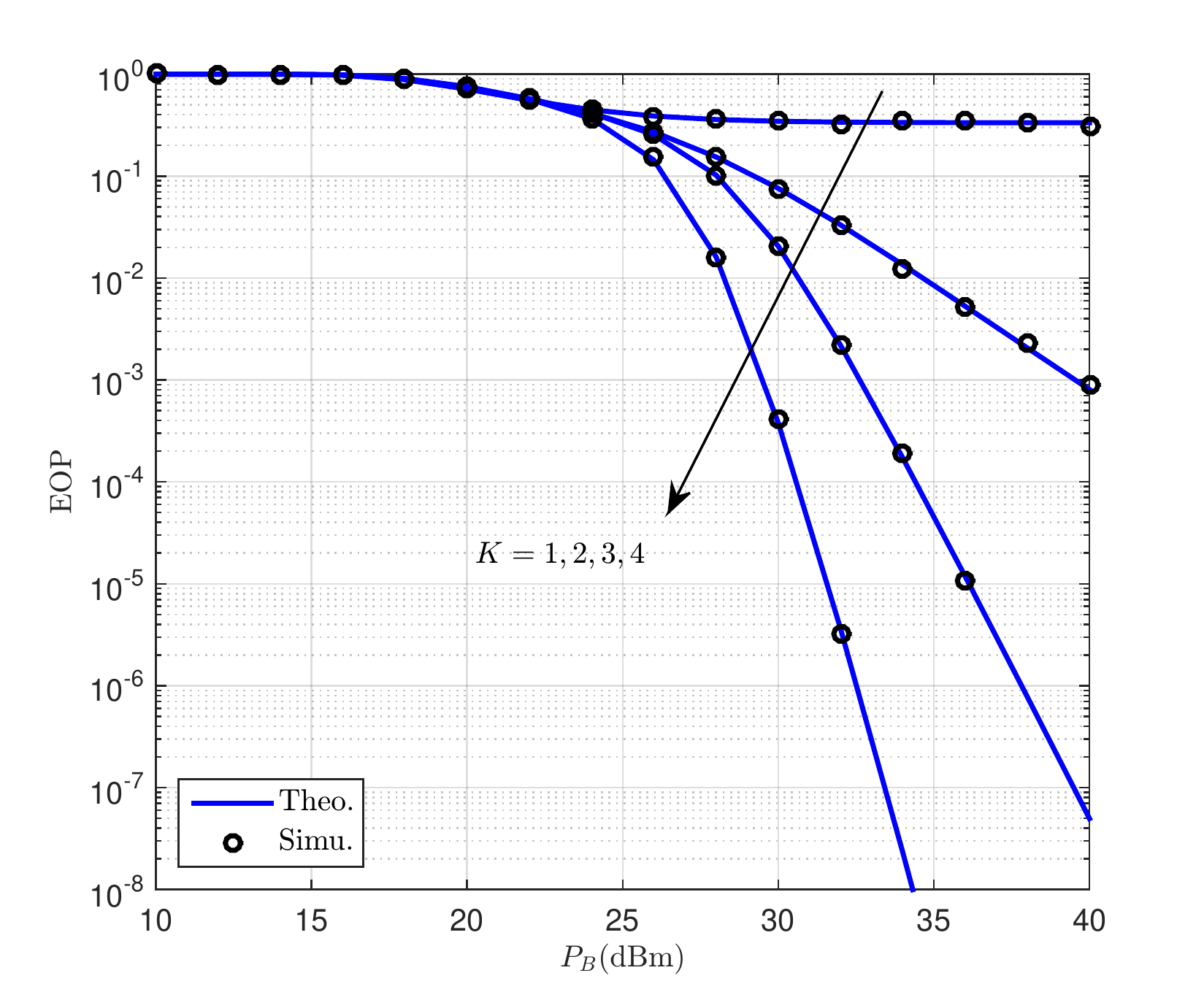}\\
  \caption{EOP of the multi-source WPT network versus the transmit power of  power beacon $P_B$ with different $K$. $l_S=l_S^{th}$, $\eta=0.8$, $\varepsilon_T=40$ mJ, $L=2$, $x_B=-3$ m, $x_D=200$ m, and $r_S=1$ m.}\label{figvecPB_M}
\end{center}
\end{figure}


Figs.  \ref{figvecPB_M} plots the EOP of the multi-source WPT network versus the transmit power of the power beacon $P_B$ with different  $K$. It is depicted that, the EOP performance
is rather poor when $K=1$, which however can be greatly improved when multiple sources are deployed, especially when a larger $P_B$ can be provided. This finding is of significant importance because it indicates the effectiveness to greatly decrease the EOP of network by increasing the number of the sources.

\begin{figure}
\begin{center}
  \includegraphics[width=3.5in,angle=0]{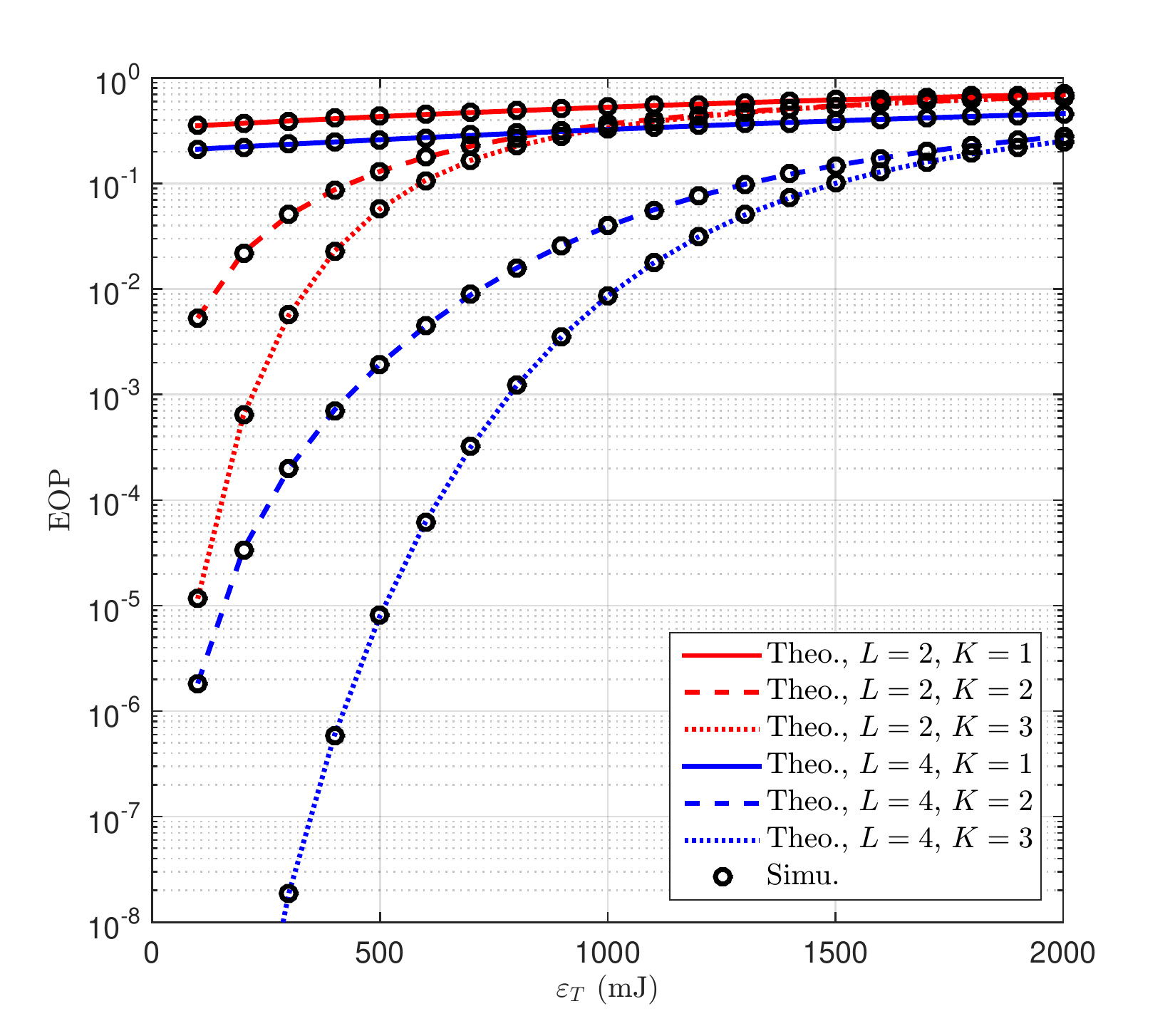}\\
  \caption{EOP of the multi-source WPT network versus the energy capacity $\varepsilon_{T}$ with different $K$ and $L$. $P_B=30$ dBm, $l_S=l_S^{th}$, $\eta=0.8$, $x_B=-3$ m, $x_D=200$ m, and $r_S=1$ m.}\label{figXvecenergyT_M_L_change}
\end{center}
\end{figure}

\begin{figure}
\begin{center}
  \includegraphics[width=3.5in,angle=0]{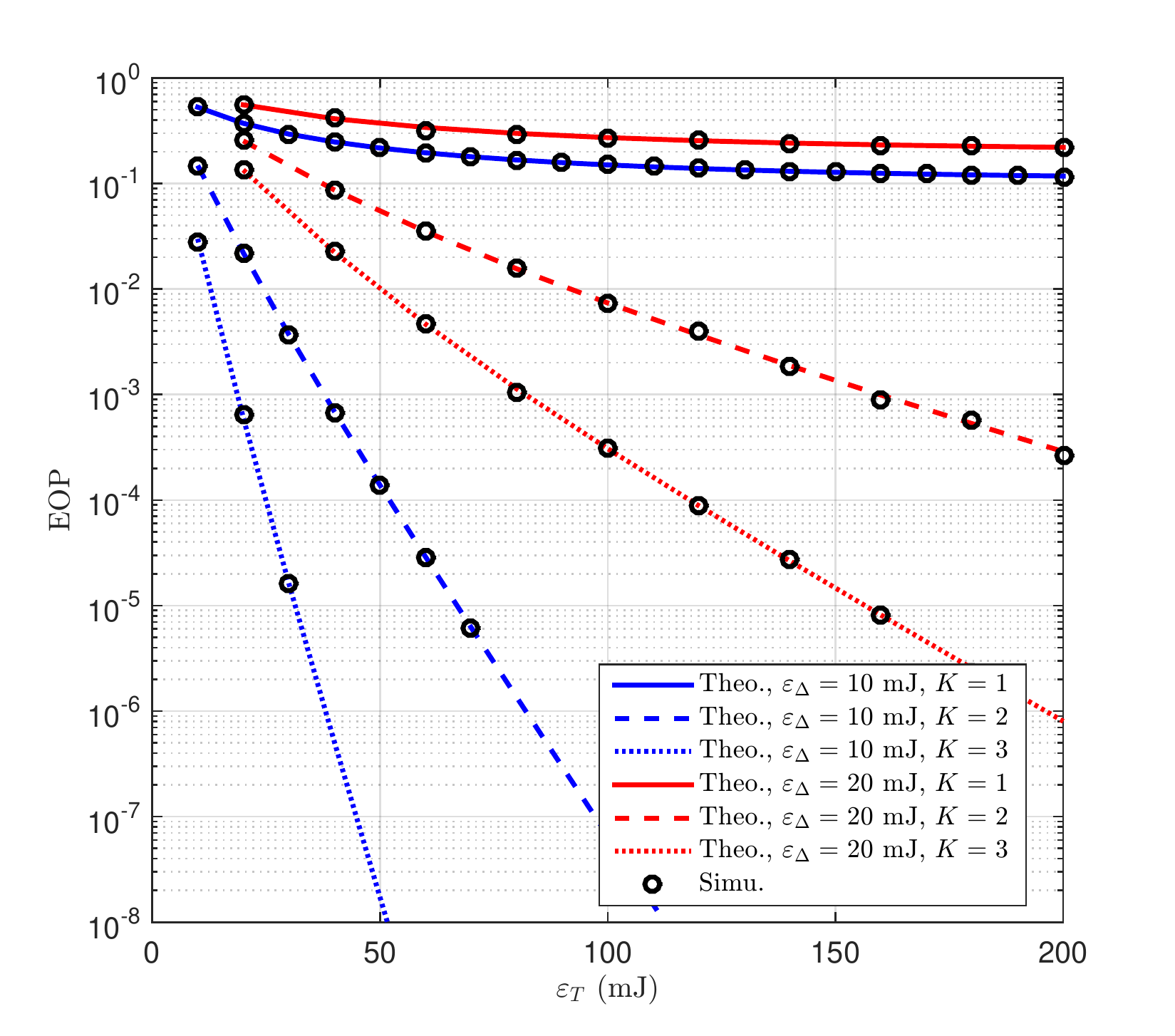}\\
  \caption{EOP of the multi-source WPT network versus the energy capacity $\varepsilon_{T}$ with different $K$ and $\varepsilon_{\Delta}$. $P_B=30$ dBm, $l_S=l_S^{th}$, $\eta=0.8$, $x_B=-3$ m, $x_D=200$ m, and $r_S=1$ m.}\label{figXvecenergyT_M_deltaenergy_change}
\end{center}
\end{figure}

Figs. \ref{figXvecenergyT_M_L_change}-\ref{figXvecenergyT_M_deltaenergy_change} examine the EOP of the multi-source WPT network versus the energy capacity $\varepsilon_{T}$.
We note that in Fig. \ref{figXvecenergyT_M_L_change}, the discretization level of the network $L$ is fixed so that the single unit of energy ${\varepsilon _\Delta }$ grows proportionally with the growth of ${\varepsilon _T}$. However, in Fig. \ref{figXvecenergyT_M_deltaenergy_change}, $L$ and ${\varepsilon _T}$ are proportionally increased  while keeping ${\varepsilon _\Delta}$ unchanged. From Fig. \ref{figXvecenergyT_M_L_change}, we find that for a specific $\varepsilon_{T}$, the EOP reduces when a larger $L$ is applied. By contrast, for a given $L$, the EOP grows rapidly with the increase of $\varepsilon_{T}$. Specifically, when $\varepsilon_{T}$ is large enough, the EOP approaches close to 1, even when multiple sources are applied. On the contrary, it is demonstrated  in  Fig. \ref{figXvecenergyT_M_deltaenergy_change} that, the EOP declines significantly with the increase of $\varepsilon_{T}$, which differs from the behavior shown in Fig. \ref{figXvecenergyT_M_L_change}. We note that the harvested energy at each time slot is limited.
Therefore, less energy could be harvested for the network when the  single unit of energy  ${\varepsilon _\Delta }$ grows, as a larger ${\varepsilon _\Delta }$ is more difficult to be satisfied.

\begin{figure}
\begin{center}
  \includegraphics[width=3.5in,angle=0]{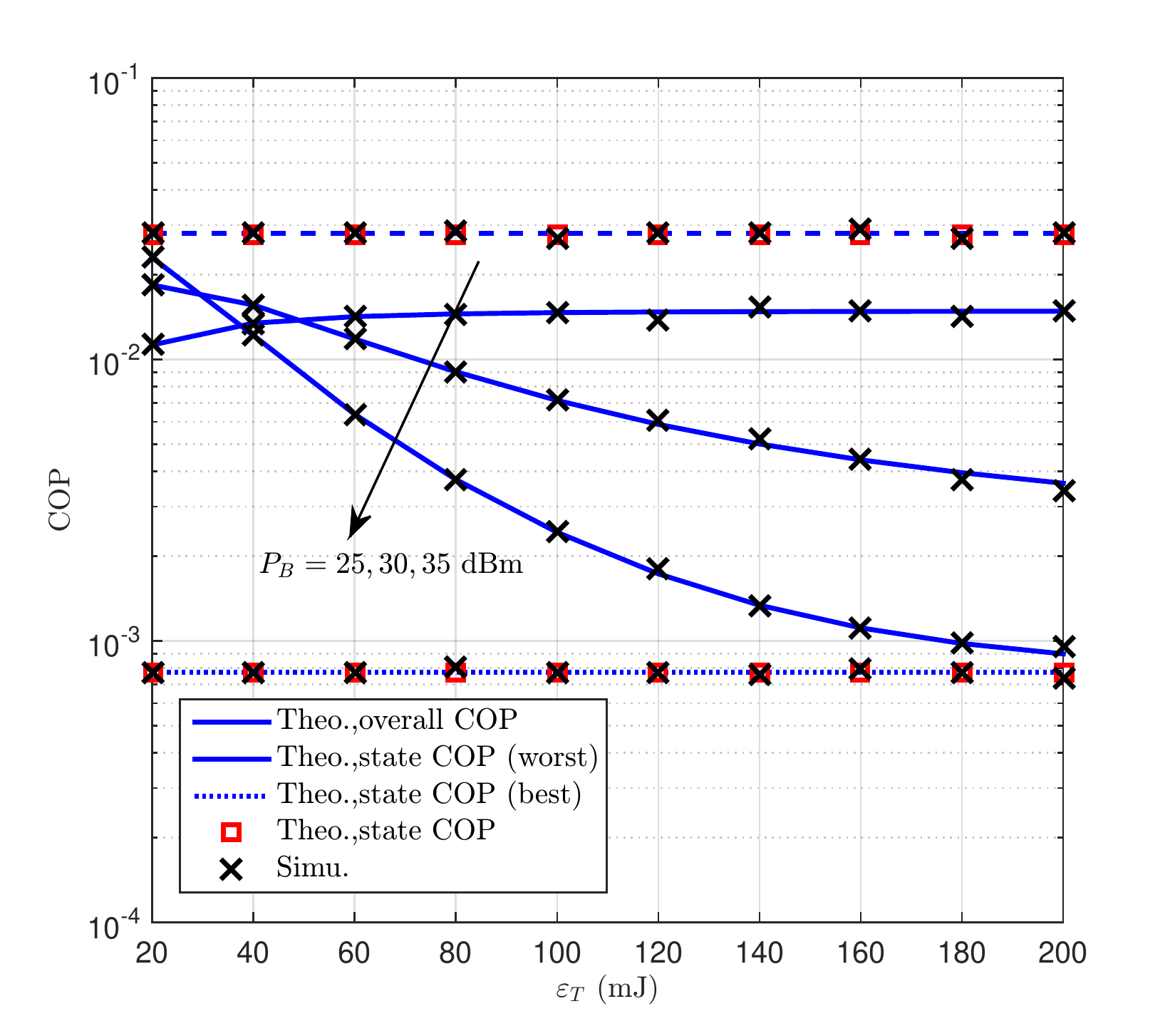}\\
  \caption{COP of the multi-source WPT network versus the energy capacity $\varepsilon_T$ with different $P_B$. $K=2$, $\varepsilon_{\Delta}=20$ mJ, $l_S=l_S^{th}$, $\eta=0.8$, $x_B=-3$ m, $x_D=200$ m, and $r_S=1$ m.}\label{figCOPXenergyT_PBchange_energydelta20_M2}
\end{center}
\end{figure}

\begin{figure}
\begin{center}
  \includegraphics[width=3.5in,angle=0]{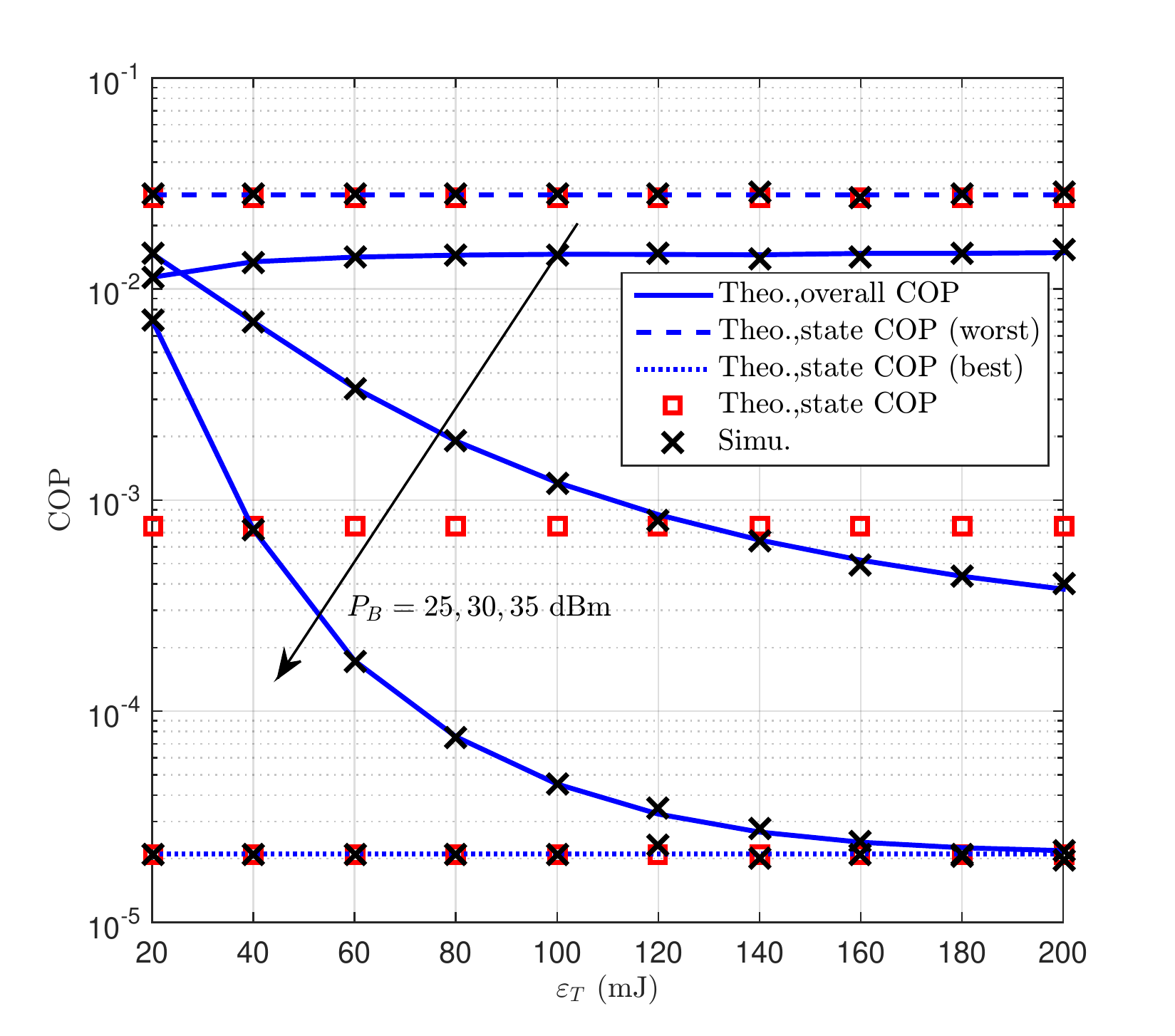}\\
  \caption{COP of the multi-source WPT network versus the energy capacity $\varepsilon_T$ with different $P_B$. $K=3$, $\varepsilon_{\Delta}=20$ mJ, $l_S=l_S^{th}$, $\eta=0.8$, $x_B=-3$ m, $x_D=200$ m, and $r_S=1$ m.}\label{figCOPXenergyT_PBchange_energydelta20_M3}
\end{center}
\end{figure}

Figs. \ref{figCOPXenergyT_PBchange_energydelta20_M2}-\ref{figCOPXenergyT_PBchange_energydelta20_M3} compare the COP of the multi-source WPT network versus the energy capacity $\varepsilon_T$ with different $P_B$. It is noted that the red square symbols represent the COPs when the network remains at a certain state, and the blue lines are the network overall COPs. As depicted in these figures, the COPs under different states vary greatly, and the overall COP firstly approaches to the worst state performance and then goes down to get close to the best state performance if an appropriate $P_B$ could be provided. Besides, this trend could be accelerated by increasing $P_B$. All these observations indicate that a greater energy capacity and $P_B$  are both benefit to decrease the overall COP of the network.

\begin{figure}
\begin{center}
  \includegraphics[width=3.5in,angle=0]{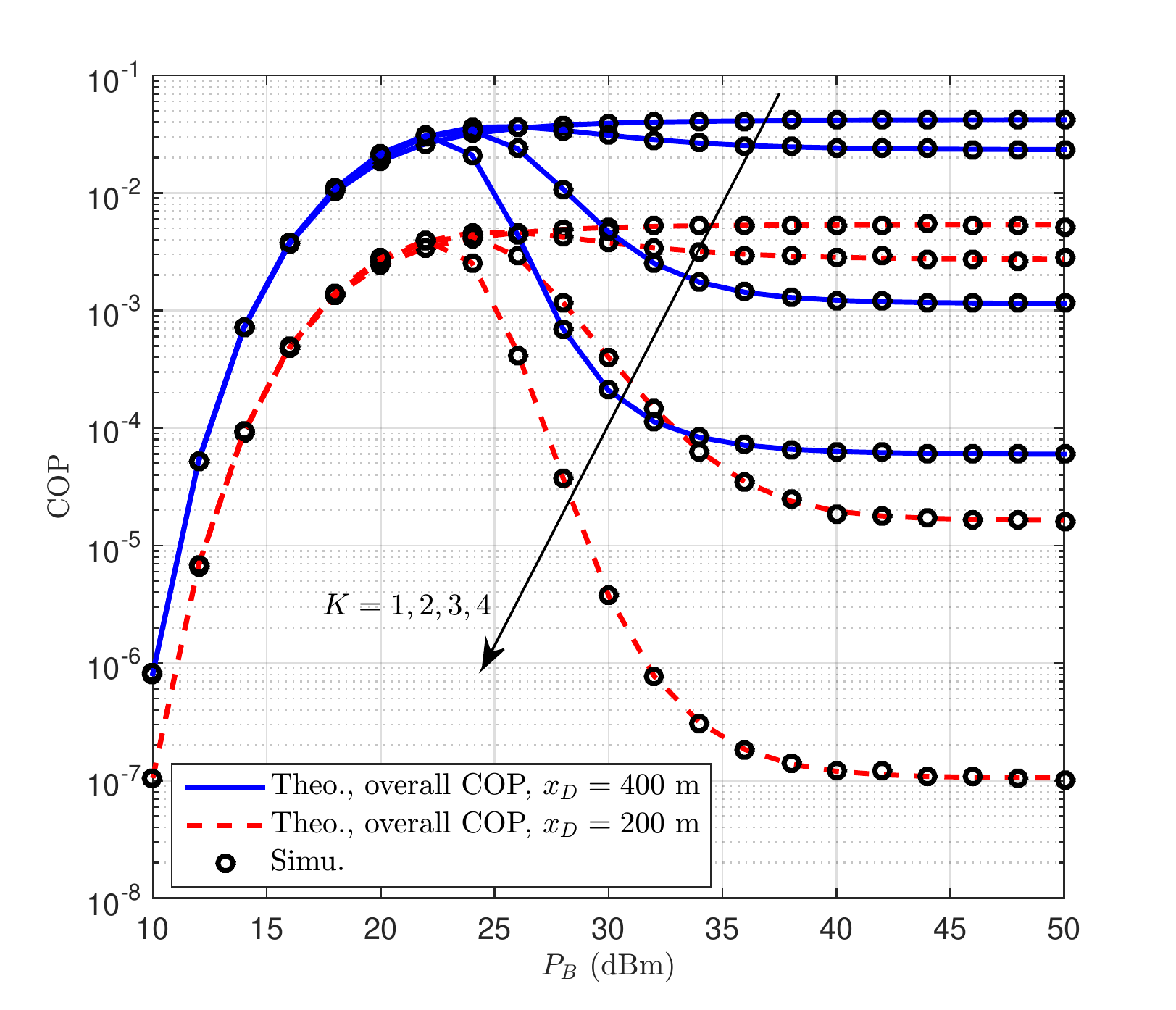}\\
  \caption{COP of the multi-source WPT network versus the transmit power of  power beacon $P_B$ with different $K$ and $x_D$. $L=2$, $\varepsilon_{T}=20$ mJ, $l_S=l_S^{th}$, $\eta=0.8$, $x_B=-3$ m, $x_D=200$ m, and $r_S=1$ m.}\label{figCOPXvecPB_Mchange_dOD_change}
\end{center}
\end{figure}

Fig. \ref{figCOPXvecPB_Mchange_dOD_change} plots the COP of the multi-source WPT network versus the transmit power of the power beacon $P_B$ with different $K$ and $x_D$.  As can be predicted, the COP of the network grows rapidly when $x_D$ increases, which again is resulted from the path loss effect of the wireless channel. In addition, we observe that the COP performance could also be significantly enhanced by adding the number of the sources. It is noted that for a specific line with fixed $K$ and $x_D$, the COP goes up with the increase of $P_B$ at first and then turns down quickly at about 20 to 25 dBm, which reaches a floor eventually. We highlight that, as has clarified previously, it is rather difficult for the sources to collect sufficient energy from the wireless signals if $P_B$ remains at a very low level. Hence, the EOP of network would be rather large in this case, so IT operation can only occur with a very little probability. Recalling that the overall COP is the weighted average of all the states, hence, it would be rather low because the network will stay in energy outage state with a very high probability. It should be pointed out that the low level of COP under this condition does not mean a good performance. Instead, it indicates a very poor performance because it will result in a huge transmission delay to the sources.

\begin{figure}
\begin{center}
  \includegraphics[width=3.5in,angle=0]{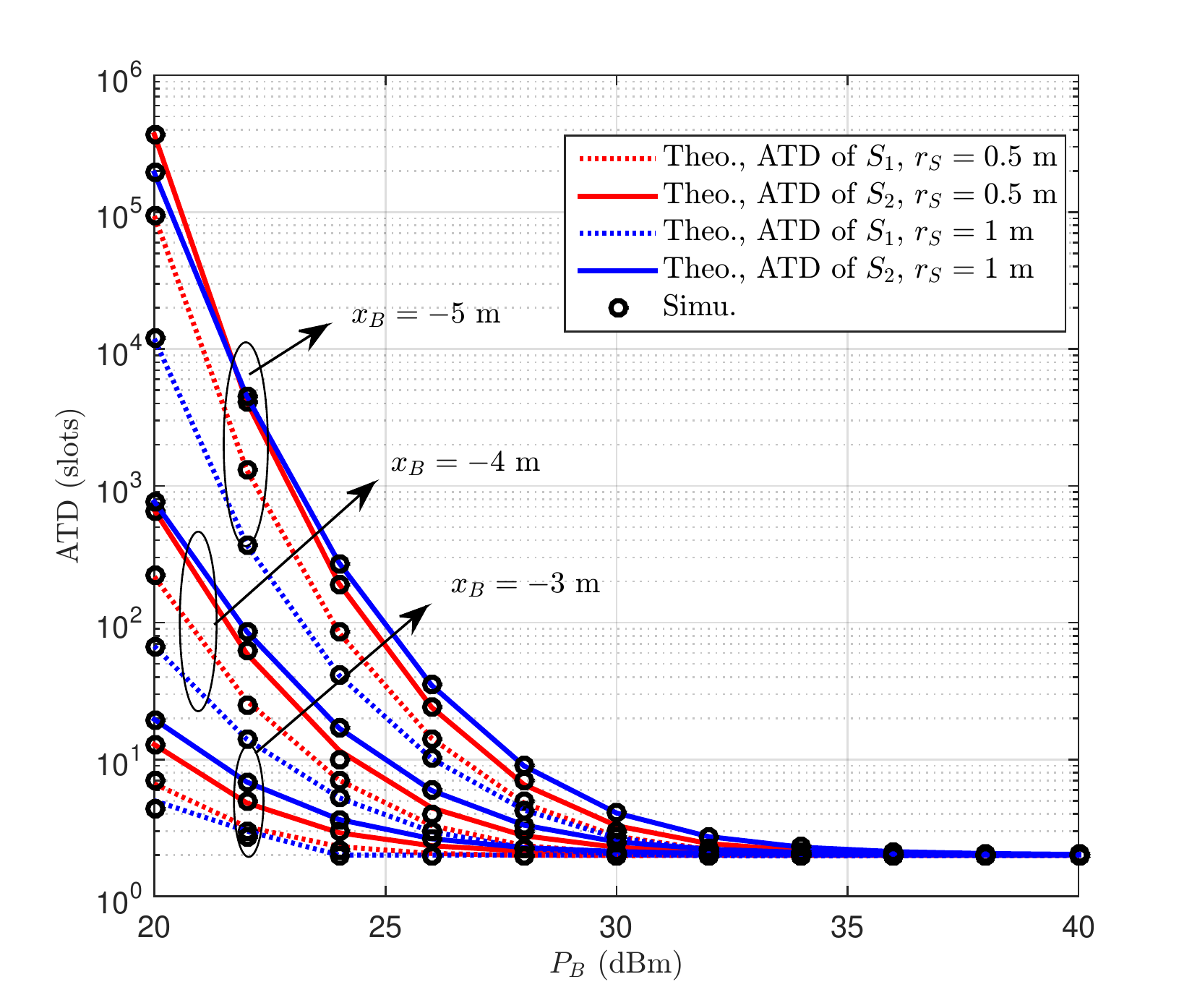}\\
  \caption{ATD of the multi-source WPT network versus the transmit power of  power beacon $P_B$ with different $r_S$ and $x_B$. $K=2$, $L=2$, $\varepsilon_{T}=20$ mJ, $l_S=l_S^{th}$, $\eta=0.8$, and $x_D=200$ m.}\label{figTDXPB_xBchange}
\end{center}
\end{figure}

\begin{figure}
\begin{center}
  \includegraphics[width=3.5in,angle=0]{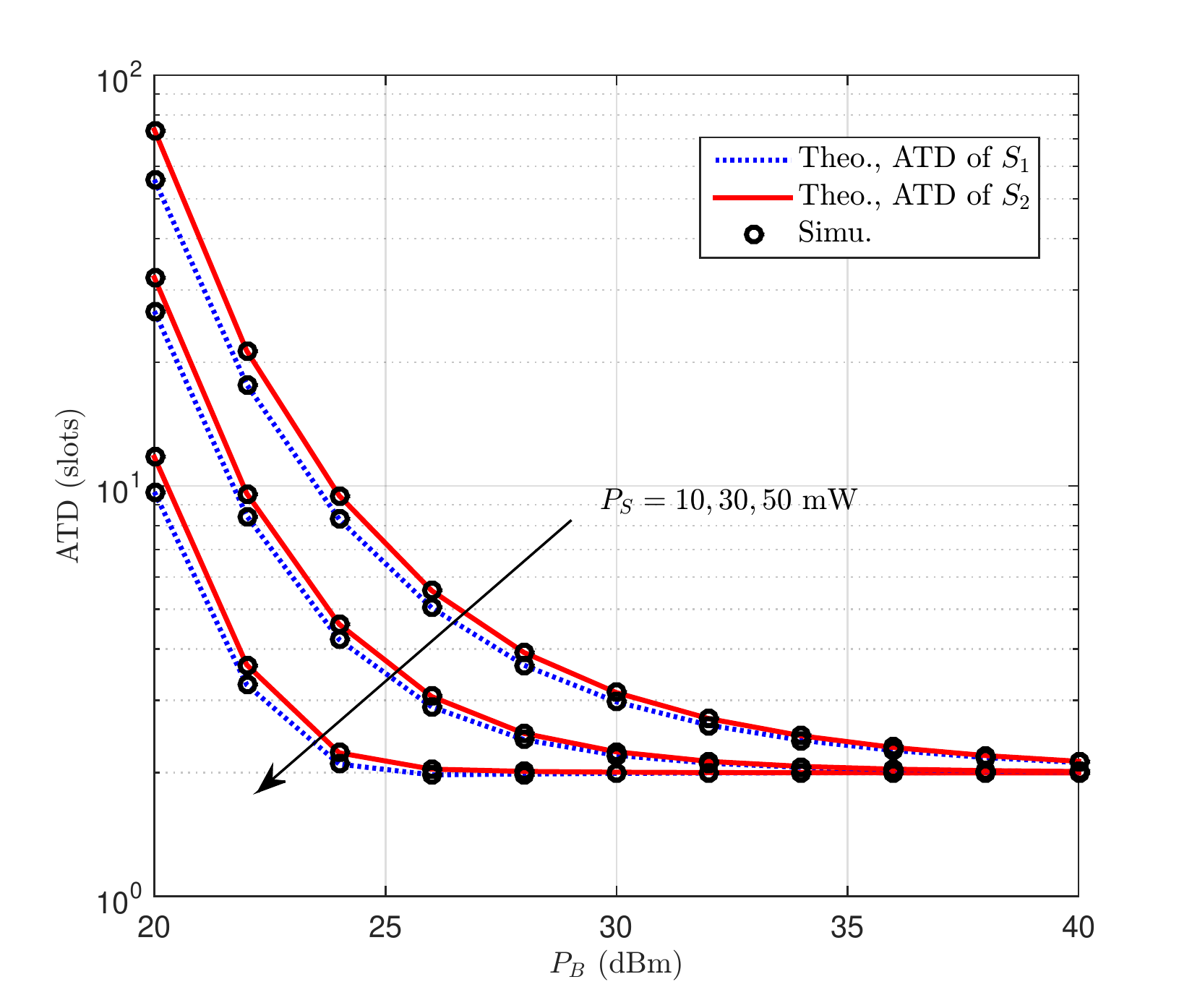}\\
  \caption{ATD of the multi-source WPT network versus the transmit power of  power beacon $P_B$ with different $P_S$. $K=2$, $L=5$, $\varepsilon_{T}=50$ mJ, $\eta=0.8$, $r_S=0.3$ m, $x_B=-3$ m and $x_D=200$ m.}\label{figTDXPB_PSchange}
\end{center}
\end{figure}

Figs. \ref{figTDXPB_xBchange}-\ref{figTDXPB_PSchange} present the ATD of the multi-source WPT network versus the transmit power of the power beacon $P_B$ with different $r_S$, $x_B$ and $P_S$. It is easy to find from these two figures that the ATD performance is not symmetric to all the sources, and this asymmetry  would be enlarged when $r_S$ increases. This is comprehensible because in the proposed network, each source  undergoes independent but not identically distributed channels. Generally speaking, the sources that are more close to the power beacon will have lower ATD. Furthermore, we observe in both two figures that the ATD becomes about 2 time slots when $P_B$ becomes large, which is equal to the number of the sources. Moreover, we see that the ATD rises sharply when the power beacon gets far from the sources. Furthermore, Fig. \ref{figTDXPB_PSchange} depicts that the ATD  for all the sources would increase when the transmit power of sources is promoted.
We note that, under the given conditions, $P_S=10,30,50$ mJ actually corresponds to $l_S=1,3,5$, respectively. As a result, by promoting $P_S$, on the one hand, it needs to spend much more time slots for the sources to accumulate sufficient energy, and on the other hand, the energy consumption of the IT operation would also increase.

\begin{figure}
\begin{center}
  \includegraphics[width=3.5in,angle=0]{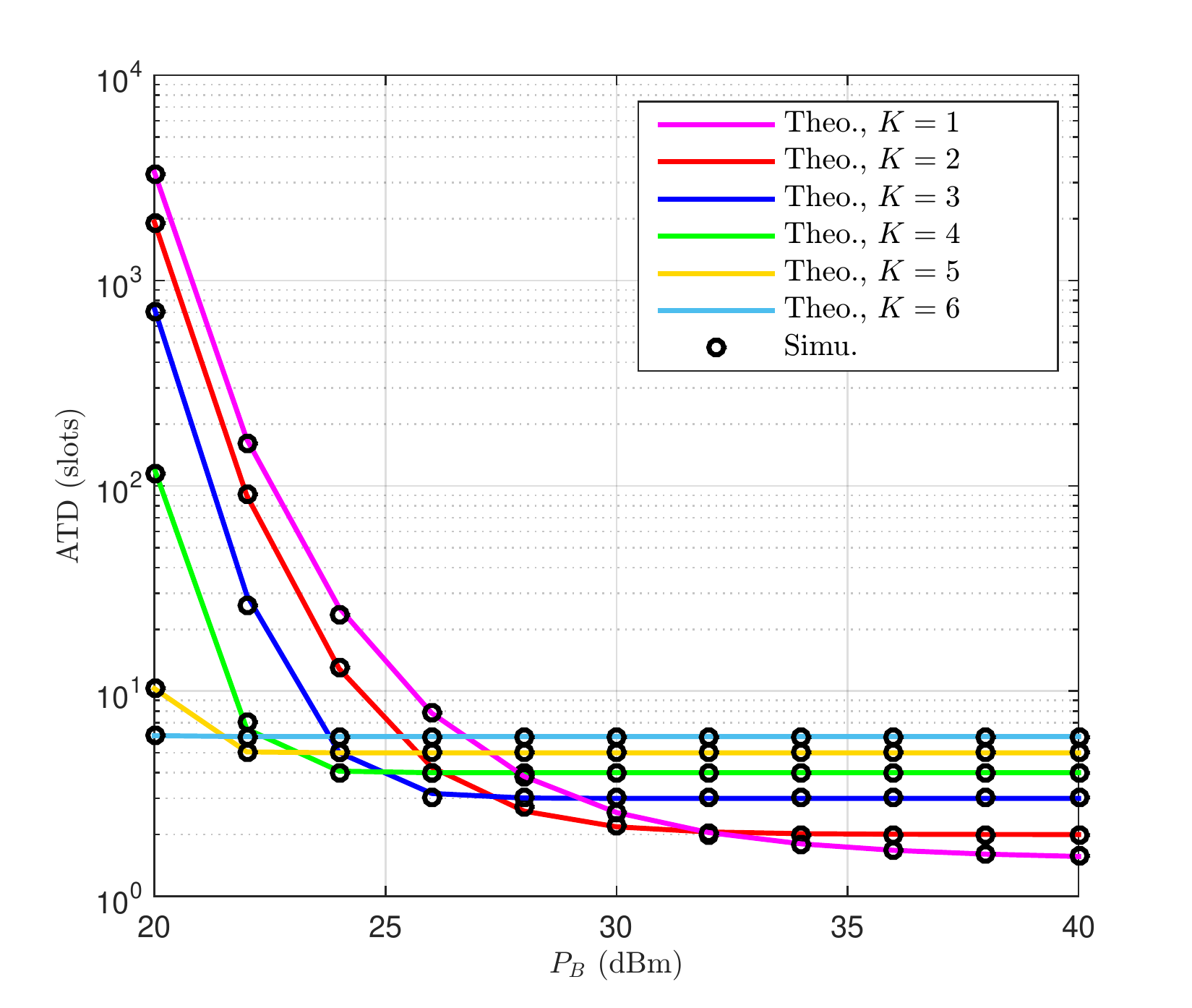}\\
  \caption{ATD of the multi-source WPT network versus the transmit power of  power beacon $P_B$ with different $K$. $L=2$, $\varepsilon_{T}=20$ mJ, $\eta=0.8$, $r_S=0$ m, $x_B=-4$ m and $x_D=200$ m.}\label{figTDXPB_Mchange}
\end{center}
\end{figure}

Figs. \ref{figTDXPB_Mchange} plots the ATD of the multi-source WPT network versus the transmit power of the power beacon $P_B$ with different $K$. Similar with Figs. \ref{figTDXPB_xBchange}-\ref{figTDXPB_PSchange},  we find that the ATD  could be rather huge in the low regime of $P_B$. However, when the number of the sources increase gradually, the ATD performance could be  improved drastically. For example, when $P_B=20$ dBm, the ATD will decline from  about 3000 time slots when $K=1$ to just about 6 time slots when $K=6$. Furthermore, when $P_B$ gets high, the ATD reduces quickly and eventually reaches a constant, which is about $K$. We note that the best ATD performance is also $K$ in a network where all the sources are energy-sufficient, which is resulted from the source selection approach. All above results imply the validity to improve ATD performance by deploying more sources in the network, especially when the wireless energy is not so sufficient.

\section{Conclusions}\label{sec7 Conclusions}

In this paper, we presented a general Markov-based model for the  PB assisted multi-source wireless-powered network with our proposed source selection transmission scheme, which captures the dynamic energy behaviors of the state transitions of the whole network. Two network operating modes, the IT mode with ZF beamforming and the non-IT mode with equal power transmision, were proposed for sustainable energy utilization and reliable data transmission. To characterize the reliability of proposed network, the energy outage probabily was derived for non-IT mode, and the connection outage probability was derived for IT mode. To quantify the delay brought by the source selection transmission, the ATD was also defined and derived. All the analytical results are validated by simualtion, and the results shown that the EOP, COP, and ATD  can be significantly improved via increasing the number of sources deployed in the proposed network.

\appendices

{\small
}


\end{document}